\newcommand{\eps}{\epsilon}
\newcommand{\E}{\mathbf{E}}
\newcommand{\abs}[1]{\left| #1 \right|}
\newcommand{\norm}[1]{\left\lVert #1 \right\rVert}
\newcommand{\var}[1]{\mathbf{Var}\left[#1\right]}
\newcommand{\vol}{\text{vol}}
\newcommand{\sk}[1]{\text{sk}(#1)}
\newcommand{\windeg}[1]{\delta_{#1}^{\tt{in}}(\vec{G})}
\newcommand{\woutdeg}[1]{\delta_{#1}^{\tt{out}}(\vec{G})}
\newcommand{\uwindeg}[1]{d_{#1}^{\tt{in}}(\vec{G})}
\newcommand{\uwoutdeg}[1]{d_{#1}^{\tt{out}}(\vec{G})}
\newcommand{\windegP}[1]{\delta_{#1}^{\tt{in}}(\vec{P})}
\newcommand{\woutdegP}[1]{\delta_{#1}^{\tt{out}}(\vec{P})}
\newcommand{\uwoutdegP}[1]{d_{#1}^{\tt{out}}(\vec{P})}
\renewcommand{\L}{\mathcal{L}}
\renewcommand{\S}{\mathcal{S}}
\newcommand{\IND}{Indexing}
\newtheorem{theorem}{Theorem}
\newtheorem{lemma}{Lemma}
\newtheorem{observation}{Observation}
\newtheorem{definition}{Definition}}
\newcommand{\qinhides}[1]{}
\newenvironment{proof}{\trivlist\item[]\emph{Proof:}}%
{\unskip\nobreak\hskip 1em plus 1fil\nobreak$\Box$
\parfillskip=0pt%
\endtrivlist}
\begin{document}

\title{A Sketching Algorithm for Spectral Graph Sparsification}

\author{Jiecao Chen\thanks{Indiana University Bloomington, jiecchen@indiana.edu}, Bo Qin \thanks{CSE Department, Hong Kong University of Science and Technology, bqin@cse.ust.hk}, David P. Woodruff \thanks{IBM Almaden Research, dpwoodru@us.ibm.com}, Qin Zhang \thanks{Indiana University Bloomington,  qzhangcs@indiana.edu}}
  
\begin{titlepage}
\date{}

\maketitle

 \begin{abstract}
 We study the problem of compressing a weighted graph $G$ on $n$ vertices, building a ``sketch'' $H$ of $G$, so that given any vector $x \in \mathbb{R}^n$, the value $x^T L_G x$ can be approximated up to a multiplicative $1+\eps$ factor from only $H$ and $x$, where $L_G$ denotes the Laplacian of $G$. One solution to this problem is to build a spectral sparsifier $H$ of $G$, which, using the result of Batson, Spielman, and Srivastava, consists of $O(n \eps^{-2})$ reweighted edges of $G$ and has the property that simultaneously for all $x \in \mathbb{R}^n$, $x^T L_H x = (1 \pm \eps) x^T L_G x$. The $O(n \eps^{-2})$ bound is optimal for spectral sparsifiers. We show that if one is interested in only preserving the value of $x^T L_G x$ for a {\it fixed} $x \in \mathbb{R}^n$ (specified at query time) with high probability, then there is a sketch $H$ using only $\tilde{O}(n \eps^{-1.6})$ bits of space. This is the first data structure achieving a sub-quadratic dependence on $\eps$. Our work builds upon recent work of Andoni, Krauthgamer, and Woodruff who showed that $\tilde{O}(n \eps^{-1})$ bits of space is possible for preserving a fixed {\it cut query} (i.e., $x\in \{0,1\}^n$) with high probability; here we show that even for a general query vector $x \in \mathbb{R}^n$, a sub-quadratic dependence on $\eps$ is possible. Our result for Laplacians is in sharp contrast to sketches for general $n \times n$ positive semidefinite matrices $A$ with $O(\log n)$ bit entries, for which even to preserve the value of $x^T A x$ for a fixed $x \in \mathbb{R}^n$ (specified at query time) up to a $1+\eps$ factor with constant probability, we show an $\Omega(n \eps^{-2})$ lower bound. 

 \end{abstract}

 \end{titlepage}

\section{Introduction}
Given an $n \times n$ matrix $A$, an important task is to be able to compress
$A$ to a sketch $s(A)$ so that given $s(A)$, one can approximate
$\|Ax\|_2^2$ for a given query vector $x$. Typically, the form of approximation
is allowing the sketch to report a $(1+\eps)$-approximation to the value
of $\|Ax\|_2^2$, namely, a number $v(x) \in [(1-\eps)\|Ax\|_2^2, (1+\eps)\|Ax\|_2^2]$. One also can make the distinction of the sketch succeeding on all
queries $x$ simultaneously, or for any fixed query $x$ with high probability. In case of the former
we say $A$ has the ``for all'' guarantee, while in the latter case we say it has the ``for each''
guarantee. 
Ideally, $s(A)$ can be stored with
much less space than storing the original matrix $A$. 

Compressing a matrix so as to preserve its behavior on query vectors $x$
is a fundamental task and has a number of applications. For example, in 
graph sparsification 
$A$ is the square-root of the Laplacian $L(G)$ of a graph $G$ 
and $x$ is a cut or spectral query. In the case that $x \in \{0,1\}^n$
specifies a cut, $\|Ax\|_2^2$ gives the number of edges (or sum
of edge weights if $G$ is weighted) crossing the cut. 
The problem also arises in numerical linear algebra,
where $A$ is the adjoined matrix $[B, c]$ of a least squares
regression problem and querying the vector $(x, -1)$ 
gives the cost of the residual solution $\|Bx-c\|_2^2$. The fact
that $s(A)$ can be stored with small space leads to communication
savings in a distributed setting and memory savings in the data
stream model, see \cite{akw14} for communication savings 
for minimum cut and \cite{cw09} for memory savings for regression 
problems. 

For general $n \times n$ matrices $A$ with $O(\log n)$-bit entries, 
it is impossible for $s(A)$ to have the for all guarantee unless it uses
$\Omega(n^2)$ bits of space.  
One way to see this is to consider a net of $n \times n$ projector matrices
$P$ onto $n/2$-dimensional subspaces of $\mathbb{R}^n$. 
It is known that there
exists a family $\mathcal{F}$ of $r = 2^{\Omega(n^2)}$ distinct matrices 
$P_{1}, \ldots, P_{r}$ so that for all $i \neq j$, 
$\|P_i - P_j\|_2 \geq \frac{1}{2}$, where for a symmetric matrix $A$, 
$\|A\|_2 = \sup_x \frac{\|Ax\|}{\|x\|}$. An analogous bound holds even if we
round the entries of $P \in \mathcal{F}$ to finite precision. We can show then that 
if we had a sketch $s(A)$ for
each $A \in \mathcal{F}$ 
which had the for all guarantee, then we could uniquely recover $A \in \mathcal{F}$; 
see Theorem~\ref{thm:all-exact} in the Appendix for a formal proof. It follows that any sketch with the for all guarantee
requires $\Omega(\log |\mathcal{F}|) = \Omega(n^2)$ bits of space. 

In light of the above impossibility, two natural relaxations are possible: 
\begin{enumerate}
\item Consider
general $A$ but only allow the for each guarantee, 
\item Consider a restricted family
of matrices $A$ and allow the for each or for all guarantee, and possibly for a
restricted family of $x$. 
\end{enumerate}

Regarding (1), due to the Johnson-Lindenstrauss theorem, 
one can achieve a sketch $s(A)$ using only $O(n \eps^{-2} (\log 1/\delta) \log n)$
bits of space. One instantiation of this is 
that by choosing a random $r \times n$ matrix S with i.i.d. 
entries in $\{-1/\sqrt{r}, +1/\sqrt{r}\}$ for $r = \Theta(\eps^{-2} \log 1/\delta)$, for any fixed $x$,
$\|SAx\|_2^2 \in [(1-\eps)\|Ax\|_2^2, (1+\eps)\|Ax\|_2^2]$ with probability $1-\delta$. Therefore
the sketch $s(A)$ can just be $SA$. For general matrices $A$, this upper bound 
turns out to be optimal up to logarithmic factors. 
That is, one cannot
improve upon the $\Omega(n \eps^{-2})$ dependence. See Theorem~\ref{thm:all-approx} in the Appendix. Note that $\|Ax\|_2^2 = x^T A^T A x$ and $A^T A$ is positive semidefinite (PSD), so one cannot preserve $x^T B x$ with the for each guarantee even for PSD matrices $B$ with fewer than $\Omega(n \eps^{-2})$ bits.

Regarding (2), our understanding is much weaker. In this paper we will focus on the important 
case that $A$ is the square-root of a Laplacian $L(G)$ of a graph $G$ with polynomially bounded weights, and so $\|Ax\|_2^2 = x^T L(G) x$. 
Recently, Andoni, Krauthgamer, and Woodruff \cite{akw14}
show that even in the case that $x \in \{0,1\}^n$ is restricted to be a cut query, if $s(A)$
satisfies the for all guarantee then it must have $\Omega(n \eps^{-2})$ bits of space. This is matched,
(up to a logarithmic factor for the distinction of words versus bits) by the cut sparsifiers
of Batson, Spielman, and Srivastava \cite{BSS14}; see \cite{BK96,SS11,BSS14,FHHP11,KP12} 
for a sample of prior work on cut sparsifiers. 

Perhaps surprisingly, if instead one allows $s(A)$ to have the for each guarantee, then it is 
possible to achieve $n \eps^{-1} \cdot \textrm{polylog}(n)$ bits of space, so at least in the regime that 
$\eps \leq 1/\log^{O(1)} n$, this provides a substantial improvement over the for all case. This
bound holds for example even when $\epsilon = 1/\sqrt{n}$, in which case it gives
$n^{3/2} \cdot \textrm{polylog} n$ size instead of $\Omega(n^2)$ which is necessary
to preserve {\it all} cut values approximately. It is also shown in \cite{akw14} that $\Omega(n \eps^{-1})$ bits of
space is necessary for the for each guarantee. 

While the above gives a substantial improvement for cut queries, it does not quite fit the above framework
of sketching a restricted family of matrices to approximately preserve the norm of 
any query vector $x \in \mathbb{R}^n$, as it holds only for cut queries. The same result of Batson, Spielman,
and Srivastava \cite{BSS14} shows that in the for all model, $O(n \eps^{-2})$ words is achievable for arbitrary
$x \in \mathbb{R}^n$ via their construction of spectral sparsifiers. They show a lower bound for a certain
type of data structure, namely, a reweighted subgraph, of $\Omega(n \eps^{-2})$ edges, while an $\Omega(n \eps^{-2})$
bit lower bound follows for arbitrary data structures by the result in \cite{akw14}. 

A natural question, posed by \cite{akw14}, 
is if one can beat the $O(n \eps^{-2})$ barrier in the for each model for Laplacians
allowing arbitrary query vectors $x \in \mathbb{R}^n$. Besides being of theoretical interest, arbitrary
queries on Laplacians give more flexibility than cut queries alone. For example, if the graph corresponds
to a physical system, e.g., the edges correspond to springs, the Laplacian allows for 
approximating the total power of the
system for a given set of potentials on the vertices. Also, since 
$x^TL(G)x = \sum_{e \in G} w_e (x_i - x_j)^2$, where $w_e$ is the weight of edge $e$ in graph $G$, 
the Laplacian evaluated on a permutation 
$\pi$ of $\{1, 2, \ldots, n\}$ gives the average squared distortion of a line embedding of $G$. 
\\\\
{\bf Our Contributions.} In this paper we give the first $o(n \eps^{-2})$ space data structure in the for
each model for Laplacians for arbitrary queries. Namely, we show that there is a data structure $s(A)$
using $n \eps^{-1.6} \textrm{polylog}(n)$ bits of space that can be built in $\textrm{poly}(n)$
time, such that for any
fixed query vector $x \in \mathbb{R}^n$, the output of $s(A)$ on query vector $x$ is in the range
$[(1-\eps)x^TL(G) x, (1+\eps)x^T L(G) x]$ with probability $1-1/n$.
\\\\
{\it An $n \eps^{-1.66} \textrm{polylog}(n)$ upper bound.}
We first show how to acheive a space bound of $n \eps^{-1.66} \textrm{polylog}(n)$ 
bits, and then show that this can be improved 
further to $n \eps^{-1.6} \textrm{polylog}(n)$. 

We can first make several simplifying assumptions. The first is that the total number of edges is
$O(n \eps^{-2})$. For this, we can first compute a spectral sparsifier \cite{SS11,BSS14}. It is useful
to note that if all edges weights were between $1$ and $\textrm{poly}(n)$, then after spectral sparsification
the edge weights are between $1$ and $\textrm{poly}(n)$, for a possibly larger polynomial. Next, we 
can assume all edge weights are within a factor of
$2$. Indeed, by linearity of the Laplacian, if all edge weights are in $[1, \textrm{poly}(n)]$, then 
we can group the weights 
into powers of $2$ and sketch each subset of edges separately, incurring an $O(\log n)$ factor blowup
in space. Third, and most importantly, we assume the Cheeger constant $h_G$ of each resulting
graph $G = (V,E)$ satisfies $h_G \geq \eps^{1/3}$, where recall for a graph $G$,
$$h_G = \inf_{S \subset V} \Phi_G(S) = \inf_{S \subset V} \frac{w_G(S, \bar{S})}{\min(\vol_G(S), \vol_G(\bar{S}))},$$
where $w_G(S, \bar{S}) = \sum_{u \in S, v \in \bar{S}} w_{u,v}$ for edge weights $u,v$, and
$\vol_G(S) = \sum_{u \in S} \sum_{v \mid (u,v) \in E} w_{u,v}$. We can assume $h_G \geq \eps^{1/3}$
because if it were not, then by definition of $h_G$ there is a sparse cut, that is, 
$\Phi(S) \leq \eps^{1/3}$. We can find a sparse cut (a polylogarithmic approximation suffices), store all
sparse cut edges in our data structure, and remove them from the graph $G$. We can then recurse on
the two sides of the cut. By a charging argument similar to that in \cite{akw14}, we can bound the total
number of edges stored across all sparse cuts. 

As for the actual data structure achieving our $n \eps^{-1.66} \textrm{polylog}(n)$ upper bound, 
we first store the weighted degree 
$\delta_u(G) = \sum_{v: (u,v) \in E} w_{u,v}$ of each node; this step is also done in the data structure
in \cite{akw14}. A difference is that we now partition vertices into
``heavy'' and ``light'' classes $V_L$ and $V_H$, where $V_L$ contains those vertices whose weighted degree
exceeds a threshold, and light consists of the remaining vertices. We include all edges incident to light
edges in the data structure. The remaining edges have both endpoints heavy and for each heavy vertex, we
randomly sample about $\eps^{-5/3}$ of its neighboring heavy edges; edge ${u,v}$ is sampled with probability
$\frac{w_{u,v}}{\delta_u(G_H)}$ where $\delta_u(G_H)$ is the sum of weighted edges from the heavy vertex $u$
to neighboring heavy vertices $v$. 

For the estimation procedure, when expanding the Laplacian 
$x^T L x = \sum_{u \in V} \sum_{v \in V} (x_u - x_v)^2 w_{u,v}$, we obtain
$$x^T L x = \sum_{u \in V} \delta_u(G) x_u^2 - \sum_{u \in V_L, v \in V} x_u x_v w_{u,v}
- \sum_{u \in V_H, v \in V_L} x_u x_v w_{u,v} - \sum_{u \in V_H} \sum_{v \in V_H} x_u x_v w_{u,v},$$
and our data structure has the first three summations on the right exactly; the only estimation comes from
estimating $\sum_{u \in V_H} \sum_{v \in V_H} x_u x_v w_{u,v}$, which we use our sampled heavy edges for. 
Since we only have heavy edges in this summation, 
this reduces our variance. Given this, we are able to upper bound the variance
as roughly $\eps^{10/3} \|D^{1/2} x\|_2^4$, where $D$ is a diagonal matrix with the degrees of $G$ on the diagonal.
We can then upper bound this norm by relating it to the first non-zero eigenvalue $\lambda_1(\tilde{L})$ 
of the normalized Laplacian, which cannot be too small, since by Cheeger's inequality, 
$\lambda_1(\tilde{L}) \geq \frac{h_G^2}{2}$, and we have ensured that $h_G$ is large.
\\\\
{\it An $n \eps^{-1.6} \textrm{polylog}(n)$ bit upper bound.} The rough idea to improve the previous
algorithm is to partition the edges
of $G$ into more refined groups based on the degrees of their endpoints. More precisely, we classify edges
$e$ by the minimum degree of their two endpoints, call this number $m(e)$, 
and two edges $e, e'$ are in the same class if when we round 
$m(e)$ and $m(e')$ to the nearest power of $2$, we obtain the same value. We note that the total number of vertices
with degree in $\omega(\eps^{-2})$ is $o(n)$, since we are starting with a graph with only $O(n \eps^{-2})$ edges;
therefore, all edges $e$ with $m(e) = \omega(\eps^{-2})$ can be handled by applying our entire procedure recursively
on say, at most $n/2$ nodes. Thus, it suffices to consider $m(e) \leq \epsilon^{-2}$. 

The intuition now is that as $m(e)$ increases, the variance of our estimator decreases since the two endpoints
have even larger degree now and so they are even ``heavier'' than before. Hence, we need fewer edge samples
when processing a subgraph restricted to edges with large $m(e)$. On the other hand, a graph on edges $e$ 
for which every value of $m(e)$ is small simply cannot have too many edges; indeed, every edge is incident
to a low degree vertex. Therefore, when we partition the graph to ensure the Cheeger constant $h_G$ is small,
since there are fewer total edges (before we just assumed this number was upper bounded by $n \eps^{-2}$), now
we pay less to store all edges across sparse cuts. Thus, we can balance these two extremes, and doing so we
arrive at our overall $n \eps^{-1.6} \textrm{polylog}(n)$ bit space bound. 

Several technical challenges arise when performing this more refined partitioning. One is that when doing the
sparse cut partitioning to ensure the Cheeger constant is small, we destroy the minimum degree of endpoints of
edges in the
graph. Fortunately we can show that for our setting of parameters, the total number of edges removed along
sparse cuts is small, and so only a small number of vertices have their degree drop by more than a factor of $2$.
For these vertices, we can afford to store all edges incident to them directly, so they do not contribute
to the variance. Another issue that arises is that to have small variance, we would like to ``assign'' each
edge $\{u,v\}$ to one of the two endpoints $u$ or $v$. If we were to assign it to both, we would have higher
variance. This involves creating a companion
or ``buddy graph'' which is a directed graph associated with the original graph. This directed graph assists
us with the edge partitioning, and tells us which edges to potentially sample from which vertices. 


\section{Preliminaries}
\paragraph{Notations and definitions.}
Let $G = (V, E, w)$ be an undirected positively weighted graph, with weight function $w : V \times V \to \mathbb{R}^+ \cup \{0\}$. For any $e = (u, v) \in E$, denote $w_{u,v}(G)$ or $w_e(G)$ to be its weight. For any $e = (u, v) \not\in E$, set $w_{u,v} = w_e = 0$. Let $w_{\max}$ and $w_{\min}$ be the maximum and minimum weights of edges in $E$ respectively.

For a vertex $u \in V$, let $\delta_u(G)$ be the weighted degree of $u$ in $G$, i.e. $\delta_u(G) = \sum_{v \in V} w_{u,v}$, and let $d_u(G)$ be the unweighted degree of $u$ in $G$, i.e. $d_u(G) = \abs{\{v \in V\ |\ w_{u,v} > 0\}}$.

Let $\vec{G} = (V, \vec{E}, w)$ be a directed positively weighted graph. For a vertex $u \in V$, define weighted in- and out-degree $\windeg{u} = \sum_{v: (v, u) \in \vec{E}} w_{v, u}$, $\woutdeg{u} = \sum_{v: (u, v) \in \vec{E}} w_{u, v}$, and unweighted in- and out-degree $\uwindeg{u} = |\{v\ |\ (v, u) \in \vec{E}, w_{v, u} > 0\}|$, and $\uwoutdeg{u} = |\{v\ |\ (u, v) \in \vec{E}, w_{u, v} > 0\}|$,

For a vertex set $S \subset V$, let $G(S)$ be the vertex-induced subgraph of $G$, and $E(S)$ be the edge set of $G(S)$. And for an edge set $F \subset E$, let $G(F)$ be the edge-induced subgraph of $G$, and $V(F)$ be the vertex set of $G(F)$; definitions will be the same if edges are directed.

Let $L(G)$ be the unnormalized Laplacian of $G$, and let $\hat{L}(G)$ be the normalized Laplacian of $G$.  Let $\lambda_1(L)$ be the second smallest eigenvalue of the matrix $L$ (note that the smallest eigenvalue of the Laplacian $L$ is $\lambda_0(L)=0$).


We say a random variable $X$ is a $(1+\eps, \delta)$-approximation of $Y$ if $(1-\eps)Y \le X \le (1+\eps)Y$ with probability at least $1 - \delta$.

For simplicity, we use $\tilde{O}(f)$ to denote $f \cdot \text{poly}\log(\abs{V}\abs{E}/(\eps\delta))$. We assume $\abs{V} > 1/\eps$ since otherwise we can just store the whole graph.

We define spectral sketch of a graph as follows.

\begin{definition}[$(1+\eps, \delta)$-spectral-sketch]
We say a sketch of $G$ (denoted by $\sk{G}$) is a {\em $(1+\eps, \delta)$-spectral-sketch} of $G$ if there is a reconstruction function that given $\sk{G}$ and $x\in \mathbb{R}^{|V|}$,  outputs a $(1+\eps)$-approximation to $x^TL(G)x$ with probability at least $1 - \delta$.
\end{definition}

Note that spectral sketch is different from spectral sparsifier~\cite{BSS14} in that (1) spectral sketch is ``for each" (i.e., preserve the value of the Laplacian quadratic form $x^TL(G)x$ for each $x \in \mathbb{R}^{\abs{V}}$ with high probability) while spectral sparsifier is ``for all" (i.e., works for all $x \in \mathbb{R}^{\abs{V}}$ with high probability). Clearly, a spectral sparsifier is also a spectral sketch. And (2) spectral sparsifier is a subgraph of the original graph $G$, while spectral sketch is not necessarily a subgraph. 






We will need Cheeger's constant and Cheeger's inequality.

\begin{definition}[Cheeger's constant]
For any $S \subset V$, let $w_G(S,\overline{S}) = \sum_{u\in S, v\in \overline{S}} w_{u,v}$ be the weighted cut in $G$, and let $\vol_G(S) = \sum_{u\in S}\delta_u(G)$ be the weighted volume of $S$ in $G$. Let $\Phi_G(S) = \frac{w_G(S, \overline{S})}{\min\{\vol_G(S), \vol_G(\bar{S})\}}$ be the {\em conductance} of the cut $(S, \bar{S})$. We define {\em Cheeger's constant} to be $h_G = \inf_{S\subset V}\Phi_G(S)$.
\end{definition}

\begin{lemma}[Cheeger's inequality \cite{Cheeger70}]
\label{lem:cheeger}
Let $G = (V, E, w)$ be an undirected positively weighted graph. Let $\hat L$ be the normalized Laplacian of $G$. Let $h_G$ be the Cheeger's constant of graph $G$. The following inequality holds,
\begin{equation}
  \label{eq:cheeger}
  \lambda_1(\hat L) \geq \frac{h_G^2}{2}.
\end{equation}
\end{lemma}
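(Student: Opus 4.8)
The plan is to establish the stated (nontrivial) direction of Cheeger's inequality by the classical \emph{sweep-cut} argument: starting from a Rayleigh-quotient minimizer of $\hat L$, I extract a cut whose conductance is at most $\sqrt{2\lambda_1(\hat L)}$, which rearranges to $\lambda_1(\hat L) \geq h_G^2/2$.

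First I would put the eigenvalue in a convenient variational form. With $L = D-A$ the unnormalized Laplacian, $D$ the diagonal degree matrix, and $\hat L = D^{-1/2} L D^{-1/2}$, the substitution $g = D^{1/2} f$ turns the Courant--Fischer characterization of the second-smallest eigenvalue into
\[
\lambda_1(\hat L) = \min_{f:\ \sum_u \delta_u(G) f_u = 0}\ \frac{\sum_{(u,v)\in E} w_{u,v}(f_u - f_v)^2}{\sum_u \delta_u(G)\, f_u^2}.
\]
Fix a minimizer $f$. The next step is a normalization: choose a ``weighted median'' value $m$ so that $\{u : f_u > m\}$ and $\{u : f_u < m\}$ each have volume at most $\vol_G(V)/2$, and replace $f$ by $f - m\mathbf{1}$; since $\sum_u \delta_u(G) f_u = 0$, this leaves the numerator unchanged and increases the denominator by $m^2 \vol_G(V)$, so the Rayleigh quotient only drops. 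Now split $f = f^+ - f^-$ into its positive and negative parts. A per-edge inequality $(f_u - f_v)^2 \geq (f^+_u - f^+_v)^2 + (f^-_u - f^-_v)^2$, together with $\sum_u \delta_u(G) f_u^2 = \sum_u \delta_u(G)(f^+_u)^2 + \sum_u \delta_u(G)(f^-_u)^2$ and the mediant inequality $\frac{a_1+a_2}{b_1+b_2}\geq\min\{\frac{a_1}{b_1},\frac{a_2}{b_2}\}$, shows that one of $f^+, f^-$ — call it $h \geq 0$, with $\vol_G(\mathrm{supp}(h)) \leq \vol_G(V)/2$ — still has Rayleigh quotient at most $\lambda_1(\hat L)$.

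Then comes the sweep estimate. Sort $\mathrm{supp}(h)$ in decreasing order of $h$-value and let $S_t$ be the prefix of the top $t$ vertices; each $S_t$ has $\vol_G(S_t) \leq \vol_G(V)/2$, so $w_G(S_t,\overline{S_t}) = \Phi_G(S_t)\,\vol_G(S_t) \geq h_G\,\vol_G(S_t)$. A layer-cake identity gives $\sum_{(u,v)\in E} w_{u,v}\,|h_u^2 - h_v^2| = \int_0^\infty w_G(S_{t(\tau)}, \overline{S_{t(\tau)}})\,d\tau$ and $\sum_u \delta_u(G) h_u^2 = \int_0^\infty \vol_G(S_{t(\tau)})\,d\tau$, whence $\sum_{(u,v)} w_{u,v}\,|h_u^2 - h_v^2| \geq h_G \sum_u \delta_u(G) h_u^2$. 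On the other hand, Cauchy--Schwarz applied to $|h_u^2 - h_v^2| = |h_u - h_v|\cdot|h_u + h_v|$, followed by $(h_u + h_v)^2 \leq 2(h_u^2 + h_v^2)$, gives $\sum_{(u,v)} w_{u,v}\,|h_u^2 - h_v^2| \leq \sqrt{2}\,\big(\sum w_{u,v}(h_u - h_v)^2\big)^{1/2}\big(\sum_u \delta_u(G) h_u^2\big)^{1/2}$. Dividing through by $\big(\sum_u \delta_u(G) h_u^2\big)^{1/2}$ and squaring yields $h_G^2 \leq 2 \cdot \frac{\sum w_{u,v}(h_u - h_v)^2}{\sum_u \delta_u(G) h_u^2} \leq 2\lambda_1(\hat L)$, which is the claim.

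The only delicate points — more a matter of care than of new ideas — are (i) the median shift, where one must check that the denominator genuinely does not shrink and that the surviving part $h$ really has support volume at most $\vol_G(V)/2$, so that along the sweep $\Phi_G(S_t)$ is computed with $\vol_G(S_t)$ in the denominator rather than $\vol_G(\overline{S_t})$; and (ii) making the layer-cake identity rigorous in the discrete weighted setting, i.e.\ verifying that as $\tau$ sweeps, the superlevel sets $\{u : h_u^2 > \tau\}$ coincide exactly with the prefixes $S_{t(\tau)}$ and that the integrals telescope. Everything else is elementary algebra and two invocations of Cauchy--Schwarz.
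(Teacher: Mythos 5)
Your proof is correct. It is worth noting at the outset that the paper itself does not prove Lemma~\ref{lem:cheeger}; it simply cites Cheeger's original paper and uses the inequality as a black box. What you have written is the standard ``sweep-cut'' proof of the nontrivial direction of the discrete, weighted Cheeger inequality, and every step you describe is sound: the median shift leaves the Dirichlet form unchanged and can only increase the denominator (by exactly $m^2\vol_G(V)$, using $\sum_u \delta_u f_u = 0$); the per-edge inequality $(f_u-f_v)^2 \geq (f_u^+-f_v^+)^2+(f_u^--f_v^-)^2$ holds by checking the sign cases; the mediant inequality transfers the bound on the Rayleigh quotient to one of the two parts (with the degenerate case $b_1=0$ or $b_2=0$ handled trivially since the surviving part then equals $\pm f$); the layer-cake identities are exact because $h\geq 0$ means the superlevel sets of $h^2$ are precisely the prefixes of the $h$-sorted order within $\mathrm{supp}(h)$, each of which has volume at most $\vol_G(V)/2$ so that $\Phi_G(S_t)\geq h_G$; and the Cauchy--Schwarz step uses $\sum_{(u,v)\in E} w_{u,v}(h_u+h_v)^2 \leq 2\sum_u \delta_u h_u^2$, which is a clean consequence of summing edge contributions into the endpoint degrees. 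Since the paper offers no proof to compare against, there is no question of ``same vs.\ different route''; your argument simply supplies the omitted standard derivation, and it is self-contained and correct.
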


\begin{lemma}
\label{lem:spielman}
Given an undirected positively weighted graph $G = (V, E, w)$ with ${w_{\max}}/{w_{\min}} = \text{poly}(|V|)$, there is an algorithm that takes $G$ as the input, and output a graph $\tilde G = (V, \tilde E, \tilde w)$ such that
  \begin{enumerate}
    \item $\tilde G$ is a $(1+\eps, \delta)$-spectral-sketch of $G$ of size $\tilde{O}({|V|}/{\eps^2})$ bits.
    \item ${\tilde w_{\max}}/{ \tilde w_{\min}}$ is bounded by $\text{poly} (|V|)$.
  \end{enumerate}
\end{lemma}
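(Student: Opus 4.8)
The plan is to combine a known spectral sparsifier construction with a rounding step that keeps edge weights polynomially bounded and the representation compact. First I would invoke the spectral sparsification result of Spielman--Srivastava~\cite{SS11} (or Batson--Spielman--Srivastava~\cite{BSS14}) applied to $G$: this yields a reweighted subgraph $G'=(V,E',w')$ with $|E'| = O(|V|/\eps^2)$ edges such that simultaneously for all $x$, $x^TL(G')x = (1\pm\eps/2)\, x^TL(G)x$. In particular $G'$ is a $(1+\eps/2,\delta)$-spectral-sketch of $G$ (indeed a ``for all'' guarantee, so certainly ``for each''), and storing $G'$ as an adjacency list of $O(|V|/\eps^2)$ edges, each with its weight, uses $\tilde O(|V|/\eps^2)$ bits provided the weights can be written down in $O(\log |V|)$ bits each.

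The second step handles the weight bound. The standard fact about effective-resistance sampling is that each surviving edge's new weight is its original weight times an inverse sampling probability, and these probabilities are bounded below by a polynomial in $1/|V|$ (since total effective resistance is $|V|-1$ and each term is at most $1$); combined with $w_{\max}/w_{\min} = \text{poly}(|V|)$ for $G$, this gives $w'_{\max}/w'_{\min} = \text{poly}(|V|)$, establishing item~2. To be safe about the number of bits, I would additionally round each $w'_e$ to a nearby value with only $O(\log(|V|/\eps))$ significant bits; since the quadratic form $x^TL(G')x$ is linear in the weights and the rounding is a relative $(1\pm\eps/2)$ perturbation of each weight, this changes $x^TL(G')x$ by at most a $(1\pm\eps/2)$ factor. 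Composing the two $(1\pm\eps/2)$ factors gives a $(1+\eps)$-approximation overall (for $\eps$ small enough, $(1+\eps/2)^2 \le 1+\eps$ after adjusting constants), so the rounded graph $\tilde G$ is a $(1+\eps,\delta)$-spectral-sketch of size $\tilde O(|V|/\eps^2)$ bits, and the rounding only perturbs the weight ratio by a constant factor, preserving item~2.

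The main obstacle is not conceptual but bookkeeping: one must verify that the spectral sparsifier's edge weights really do stay in a $\text{poly}(|V|)$ range and that $O(\log(|V|/\eps))$ bits per weight suffice — i.e., that the polynomial in the denominator of the sampling probability does not blow up after iterating or after the weight bucketing. This is exactly the point flagged in the introduction (``if all edge weights were between $1$ and $\text{poly}(n)$, then after spectral sparsification the edge weights are between $1$ and $\text{poly}(n)$, for a possibly larger polynomial''), so I would cite that observation and the effective-resistance bound rather than re-derive it. Everything else — the union/composition of error factors, the bit-count of an adjacency-list representation — is routine.
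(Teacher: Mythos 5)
Your overall plan is sound, but the parenthetical justification for the key step is wrong, and the route differs from the paper's.

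\textbf{On the weight bound.} You write that the SS11 sampling probabilities ``are bounded below by a polynomial in $1/|V|$ (since total effective resistance is $|V|-1$ and each term is at most $1$).'' This does not follow: the facts $\sum_e w_e R_e = |V|-1$ and $w_e R_e \le 1$ are perfectly consistent with one particular $w_e R_e$ being, say, $2^{-|V|}$, so they give no lower bound on any individual $p_e$. To make your argument go through you would instead need a lower bound on the effective resistance itself: after normalizing so that $w_{\min}=1$ and $w_{\max}\le |V|^c$, any unit flow from $u$ to $v$ must cross the cut isolating $u$, whose total weight is at most $(|V|-1)|V|^c$; by the energy-minimization characterization of effective resistance and Cauchy--Schwarz, $R_{u,v}\ge 1/((|V|-1)|V|^c)$, hence $w_eR_e\ge |V|^{-(c+1)}$ and $p_e\ge 1/\mathrm{poly}(|V|)$. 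With that fix your argument for item~2 (for the Spielman--Srivastava sampler) is correct, and the rounding step you add to control the bit-count is a legitimate point the paper does not spell out.

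\textbf{Comparison with the paper's proof.} The paper runs the BSS14 construction and argues item~2 by contradiction: if $\tilde w_{\max}/\tilde w_{\min}$ were superpolynomial, one could exhibit a cut in $\tilde G$ whose weight differs from the corresponding cut in $G$ by more than a $(1\pm\eps)$ factor, contradicting the spectral-sparsifier guarantee on $\{0,1\}$ test vectors. That argument is cleaner and applies directly to the deterministic BSS14 sparsifier, whereas your effective-resistance-probability argument is specific to the SS11 random-sampling construction and does not obviously carry over to BSS14 (which you also list as an option). Either route can be made to work; the cut-contradiction argument avoids the effective-resistance bookkeeping entirely.
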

 
\begin{proof}
We simply run the spectral sparsification algorithm by Batson, Spielman and Srivastava \cite{BSS14}, which produces $\tilde{G}$ with size of $\tilde{O}({|V|}/{\eps^2})$ bits. For the latter property, assume it is not true,  we would be able to pick a graph cut in $\tilde{G}$ that could not be bounded by $\text{poly}(w_{\min})$ hence leads to a contradiction.



\end{proof}

\section{A Basic Sketching Algorithm}
In this section, we described a $(1+\eps, \delta)$-spectral-sketch of size $\tilde{O}(\abs{V} / \eps^{\frac{5}{3}})$.  Let $\alpha = c_\alpha \eps^{-\frac{5}{3}}$ be a parameter we will use in this section, where $c_\alpha > 0$ is a large enough constant.

We will start with an algorithm for a class of special graphs, and then extend it to general graphs.

\subsection{Special Graphs}
\label{sec:basic-simple}
In this section we consider a class of special graphs, defined as follows.

\begin{definition}[S1-graph]
We say an undirected weighted graph $G = (V, E, w)$ is an \emph{S1-graph} (reads ``simple type-$1$ graph'') if it satisfies the followings.
\begin{enumerate}
\item All weights $\{w_e\ |\ e \in E\}$ are within a factor of $2$,  i.e. for any $e \in E$, $w_e \in [\gamma, 2 \gamma)$ for some $\gamma > 0$.

\item $h_G > \alpha\eps^2 = c_\alpha \eps^{\frac{1}{3}}$.

\end{enumerate}  
\end{definition}

Let $\mathcal{S}(G) = \{v\in V\ |\ \delta_v \leq \gamma\alpha\}$, $\mathcal{L}(G) = \{v \in V\ |\ \delta_v > \gamma\alpha\}$. For $u\in \mathcal{L}(G)$, let $\delta_u^{\L}(G) = \sum_{v\in\mathcal{L}(G)}w_{u,v}$. We will omit ``$(G)$" when there is no confusion. The algorithm for sketching S1-graph is described in Algorithm~\ref{alg:sketch-basic-simple}.  When we say ``add an edge to the sketch" we always mean ``add the edge together with its weight".

\begin{algorithm}[t]
\KwIn{An S1-Graph $G = (V, E, w)$; a quality control parameter $\eps$}
\KwOut{a $(1+\eps, 0.01)$-spectral-sketch $\sk{G}$ of $G$ }
$\sk{G} \gets \emptyset$\;
Add $\{\delta_u\ |\ u \in V\}$ to $\sk{G}$\;
\For {$u \in \mathcal{S}$}{
  Add all of $u$'s adjacent edges to $\sk{G}$\;
}

\For {$u \in \mathcal{L}$}{
  Add $\delta_u^\L$ to $\sk{G}$\;
  $E_u \gets \{ (u, v)\ |\ v \in \L \}$\;
  Sample (with replacement) $\alpha$ edges from $E_u$, where each time the probability of sampling $e = (u, v)\in E_u$ is $p_e = {w_e}/{\delta_u^\L}$\;
Add the sampled $\alpha$ edges to $\sk{G}$\;
}

\Return{$\sk{G}$}\;
\caption{{\bf Spectral-S1}($G, \eps$)}
\label{alg:sketch-basic-simple}
\end{algorithm}

Let $Y_u^v$ be the random variable denoting the number of times edge $(u, v)$ is sampled at Line $8$ in Algorithm \ref{alg:sketch-basic-simple}. It is easy to see that 
\begin{equation}
\label{eq:b-1}
\E[Y_u^v] = \frac{\alpha w_{u,v}}{\delta_u^\L} \quad \text{and} \quad \var{Y_u^v} = \alpha \left(1 - \frac{w_{u,v}}{\delta_u^\L}\right)\frac{w_{u,v}}{\delta_u^\L} \leq \alpha \frac{w_{u,v}}{\delta_u^\L}.
\end{equation}
Given a vector $x \in \mathbb{R}^{|V|}$, we use the following expression as an estimator of $x^TLx$:
\begin{equation}
  \label{eq:estimator}
  I_G = \sum_{u \in V} \delta_u x_u^2 - \sum_{u \in \mathcal{S}}\sum_{v \in V} x_u x_v w_{u,v}
        - \sum_{u \in \mathcal{L}}\sum_{v \in \mathcal{S}}x_u x_v w_{u,v}
        - \sum_{u \in \mathcal{L}}\frac{\delta_u^\L}{\alpha} \sum_{v \in \mathcal{L}} x_u x_v Y_u^v.
\end{equation}

\begin{lemma}
  \label{lem:S1-Estimator}
 Let $G = (V, E, w)$ be an S1-Graph and $L = L(G)$ be the (unnormalized) Laplacian of $G$, then $I_G$ (defined in Equation (\ref{eq:estimator})) is an unbiased estimator of $x^TLx$. Furthermore, it gives a $(1 + \eps, 0.01)$-approximation to $x^TLx$.
\end{lemma}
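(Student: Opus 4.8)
I will first check unbiasedness by a short telescoping computation, then bound the variance of $I_G$ via an elementary second-moment estimate, sharpen it using the two S1-graph properties, and finally convert $\norm{D^{1/2}x}_2^2$ into $x^TL x$ through Cheeger's inequality before applying Chebyshev.

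\textbf{Unbiasedness.} Only the last sum in $I_G$ is random. By linearity and $\E[Y_u^v]=\alpha w_{u,v}/\delta_u^\L$ (from \eqref{eq:b-1}),
$$\E[I_G]=\sum_{u\in V}\delta_u x_u^2-\sum_{u\in\S}\sum_{v\in V}x_ux_vw_{u,v}-\sum_{u\in\L}\sum_{v\in\S}x_ux_vw_{u,v}-\sum_{u\in\L}\sum_{v\in\L}x_ux_vw_{u,v}.$$
Splitting each $u$-sum over $V=\S\cup\L$, the three double sums merge into $\sum_{u\in V}\sum_{v\in V}x_ux_vw_{u,v}$, so $\E[I_G]=\sum_u\delta_ux_u^2-\sum_{u,v}w_{u,v}x_ux_v=x^TL x$ (using $w_{u,u}=0$); a heavy $u$ with $\delta_u^\L=0$ contributes nothing on either side and is harmless.

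\textbf{Variance.} Write $I_G=(\text{deterministic})-Z$ with $Z=\sum_{u\in\L}Z_u$, $Z_u=\frac{\delta_u^\L}{\alpha}x_u\sum_{v\in\L}x_vY_u^v$. The samples drawn for distinct source vertices are independent, so $\var{I_G}=\var{Z}=\sum_{u\in\L}\var{Z_u}$. For fixed $u$, $\sum_{v\in\L}x_vY_u^v$ is a sum of $\alpha$ i.i.d.\ draws each of second moment $\sum_{v\in\L}(w_{u,v}/\delta_u^\L)x_v^2$, giving $\var{Z_u}\le\frac{\delta_u^\L}{\alpha}x_u^2\sum_{v\in\L}w_{u,v}x_v^2$. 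Now invoke the S1 structure: $\delta_u^\L\le\delta_u$, and for $v\in\L$ we have $\delta_v>\gamma\alpha$ while $w_{u,v}<2\gamma$, hence $w_{u,v}<2\delta_v/\alpha$. Summing over $u,v\in\L$ and writing $D=\mathrm{diag}(\delta_u)$,
$$\var{Z}\le\frac1\alpha\sum_{u\in\L}\delta_ux_u^2\sum_{v\in\L}w_{u,v}x_v^2<\frac{2}{\alpha^2}\Big(\sum_{u\in V}\delta_ux_u^2\Big)^2=\frac{2}{\alpha^2}\norm{D^{1/2}x}_2^4.$$

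\textbf{Cheeger step and conclusion.} Since $x^TL x$ is invariant under $x\mapsto x+c\mathbf 1$ and the reconstruction function stores every degree $\delta_u$, I may evaluate the estimator at the degree-centered shift $\hat x=x-\frac{\sum_u\delta_ux_u}{\sum_u\delta_u}\mathbf 1$; this preserves $\E[I_G]=\hat x^TL\hat x=x^TL x$ and all bounds above with $x$ replaced by $\hat x$. Because $h_G>0$, $G$ is connected, so $\ker\hat L=\mathrm{span}(D^{1/2}\mathbf 1)$, and centering makes $D^{1/2}\hat x\perp\ker\hat L$; hence by Lemma~\ref{lem:cheeger}, $x^TL x=(D^{1/2}\hat x)^T\hat L(D^{1/2}\hat x)\ge\lambda_1(\hat L)\norm{D^{1/2}\hat x}_2^2\ge\frac{h_G^2}{2}\norm{D^{1/2}\hat x}_2^2$, i.e.\ $\norm{D^{1/2}\hat x}_2^4\le\frac{4}{h_G^4}(x^TL x)^2$. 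Plugging in $\alpha=c_\alpha\eps^{-5/3}$ and $h_G>c_\alpha\eps^{1/3}$ gives $\var{I_G}<\frac{8}{c_\alpha^6}\eps^2(x^TL x)^2$, so Chebyshev yields $\Pr\big[\abs{I_G-x^TL x}>\eps\,x^TL x\big]<8/c_\alpha^6\le0.01$ for $c_\alpha$ a sufficiently large constant.

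\textbf{Main obstacle.} The delicate point is the Cheeger step: one cannot bound $\norm{D^{1/2}x}_2^2$ by $x^TL x/\lambda_1(\hat L)$ for arbitrary $x$ (take $x=\mathbf 1$), so the estimate must be applied to the degree-centered vector $\hat x$, which requires noticing that the stored degrees suffice to perform this shift at query time; the second-moment bound and the use of the S1 properties are then routine.
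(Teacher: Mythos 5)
Your proof is correct and follows the paper's overall plan (compute $\E[I_G]$, bound $\var{I_G}$ by $O(\alpha^{-2})\norm{D^{1/2}x}_2^4$ using the two S1 properties, then invoke Cheeger and Chebyshev), but you have caught and repaired a genuine gap in the paper's argument at the Cheeger step. The paper writes $\|D^{1/2}x\|_2^2 \leq \frac{1}{\lambda_1(\hat L)}(D^{1/2}x)^T\hat L(D^{1/2}x)$ for arbitrary $x$, which is false: for $x$ with a nontrivial component along $\mathbf 1$ (the kernel of $L$), the right side can be much smaller than the left, and indeed as $x\to x+c\mathbf 1$ the variance of $I_G(x)$ grows while $x^TLx$ stays fixed, so the relative-error claim cannot hold for $I_G$ evaluated at $x$ directly. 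Your fix — evaluate the estimator at the degree-centered vector $\hat x = x - \frac{\sum_u\delta_ux_u}{\sum_u\delta_u}\mathbf 1$, observing that $x^TLx=\hat x^TL\hat x$, that $D^{1/2}\hat x\perp\ker\hat L$ when $h_G>0$, and that the stored degrees let the reconstruction perform this shift at query time — is exactly what is needed and is the right reading of the lemma. Your variance computation is also cleaner than the paper's: you bound $\var{\sum_v x_vY_u^v}$ by the second moment of the $\alpha$ i.i.d.\ draws, whereas the paper writes an equality $\var{\sum_v x_vY_u^v}=\sum_v x_v^2\var{Y_u^v}$ that ignores the multinomial covariances across $v$; the paper's final bound happens to agree with yours, but your derivation is the one that is actually an inequality throughout. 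In short: same architecture as the paper, with a necessary centering step the paper omits and a tidier second-moment bound.
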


\begin{proof}
Since $\E[Y_u^v] = \frac{\alpha w_{u,v}}{\delta_u^\L}$ (by (\ref{eq:b-1})), it is straightforward to show that 
$$\E[I_G] = \sum_{u \in V} \delta_u x_u^2 - \sum_{u \in \mathcal{S}}\sum_{v \in V} x_u x_v w_{u,v}
            - \sum_{u \in \mathcal{L}}\sum_{v \in V}x_u x_v w_{u,v}
            = \sum_{(u, v) \in E}  (x_u - x_v)^2 w_{u,v} = x^T L x. $$
Now let us compute the variance of $I_G$. Note that if $\var{I_G} = O\left(\eps^2 (x^TLx)^2\right)$, then by taking constant $c_\alpha$ in $\alpha = c_\alpha \eps^{-\frac{5}{3}}$ large enough, a Chebyshev's inequality immediately yields the lemma. The variance of $I_G$
\begin{eqnarray}
  \var{I_G}  & = & \var{\sum_{u\in\L}\frac{\delta_u^\L}{\alpha} \sum_{v\in\L} x_ux_vY_u^v} \nonumber \\
             & = &  \sum_{u\in\L}\frac{(\delta_u^\L)^2}{\alpha^2} \sum_{v\in\L} x_u^2x_v^2\var{Y_u^v} \nonumber \\
             &\leq & \sum_{u\in\L}\frac{(\delta_u^\L)^2}{\alpha^2} x_u^2\sum_{v\in\L} x_v^2\frac{\alpha w_{u,v}}{\delta_u^\L} \quad \quad \quad (\text{by (\ref{eq:b-1})}) \nonumber \\
             & = & \frac{1}{\alpha}\sum_{u\in\L}\delta_u^\L x_u^2\sum_{v\in\L} x_v^2 w_{u,v} \nonumber \\
             &\leq  & \frac{1}{\alpha}\sum_{u\in\L}\delta_u^\L x_u^2\sum_{v\in\L} x_v^2 \frac{2\delta_v}{\alpha} \quad \quad  (w_{u,v} \le 2 \gamma \le \frac{2\delta_v}{\alpha} \text{ by def. of S1-graph and def. of $\L$}) \nonumber \\
             &\leq & \frac{2}{\alpha^2} \sum_{u \in V} \delta_ux_u^2\sum_{v \in V} \delta_v x_v^2 \quad \quad \quad \quad (\delta_u^\L \le \delta_u \text{ by definitions}) \nonumber \\
             & = & \frac{2}{\alpha^2}\norm{D^{1/2}x}_2^4, \label{eq:c-1}
\end{eqnarray}
where $D = \text{diag}(\delta_1, \delta_2, \ldots, \delta_n)$.  The normalized Laplacian of $G$ can be written as $\hat L = D^{-\frac{1}{2}} L D^{-\frac{1}{2}}$. Define $\hat{x} = D^{1/2}x$, we have 
$$
  \|\hat{x}\|_2^2 = \hat{x}^T \hat{x} \leq \frac{1}{\lambda_1(\tilde L)}\hat{x}^T \tilde L \hat{x} 
                  \overset{by~(\ref{eq:cheeger})}{\leq} \frac{2}{h_G^2}(x^TLx) \overset{\text{property of S1-graph}}{<} \frac{2}{\alpha^2\eps^4}(x^TLx),
$$
which together with (\ref{eq:c-1}) gives $\var{I_G} < \frac{8}{\alpha^6\eps^8}(x^TLx)^2 = O\left (\eps^2(x^TLx)^2 \right )$.
\end{proof}

Using the standard ``median-trick'' (i.e. run $\log(1/\delta)$ independent estimators of $I_G$ in (\ref{eq:estimator}) and return the median of them), we can boost the success probability to $1 - \delta$ for any $\delta > 0$. We summarize our result for the S1-Graph $G$ in the following theorem.

\begin{theorem}
\label{thm:basic-simple}
There is a sketching algorithm which given an S1-graph, outputs a $(1+\eps, \delta)$-spectral-sketch of size $\tilde{O}(\abs{V}/\eps^{\frac{5}{3}})$.
\end{theorem}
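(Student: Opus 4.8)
The plan is to assemble the final theorem directly from the pieces already developed for S1-graphs. The core probabilistic guarantee is Lemma~\ref{lem:S1-Estimator}: the estimator $I_G$ in~(\ref{eq:estimator}) is unbiased and its variance is $O(\eps^2 (x^TLx)^2)$, so a Chebyshev bound gives a $(1+\eps,0.01)$-approximation from a single copy. The first step is to argue that $\sk{G}$ as produced by Algorithm~\ref{alg:sketch-basic-simple} contains exactly the data needed to evaluate $I_G$: it stores all weighted degrees $\delta_u$ (first sum), all edges incident to $\mathcal{S}$ (second and third sums), the quantities $\delta_u^{\L}$ for $u \in \mathcal{L}$, and the $\alpha$ sampled heavy edges per heavy vertex (the $Y_u^v$ counts, and hence the last sum). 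So the reconstruction function is well-defined and, by Lemma~\ref{lem:S1-Estimator}, correct with probability $0.99$.

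Next I would boost the success probability. Run $t = \Theta(\log(1/\delta))$ mutually independent instances of Algorithm~\ref{alg:sketch-basic-simple} (independent random samples at Line~8), obtaining estimators $I_G^{(1)},\dots,I_G^{(t)}$, and on a query $x$ return their median. Since each $I_G^{(i)}$ lands in $[(1-\eps)x^TLx,(1+\eps)x^TLx]$ independently with probability at least $0.99$, a standard Chernoff argument on the count of ``good'' copies shows the median lies in this interval except with probability at most $\delta$ (for an appropriate constant in $t$). This is the ``median trick'' already invoked in the text; I would just state it and cite the Chernoff bound rather than recompute constants.

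Finally, the space accounting. Each single copy stores: $|V|$ degree values, which is $\tilde O(|V|)$ bits since weights are $\text{poly}(|V|)$-bounded; $O(|\mathcal{L}|) \le O(|V|)$ values $\delta_u^{\L}$; and $\alpha = c_\alpha \eps^{-5/3}$ sampled edges per heavy vertex, for $O(|V| \eps^{-5/3})$ edges total, each described in $O(\log |V|)$ bits. The edges incident to light vertices number at most $\sum_{u\in\mathcal{S}} d_u(G)$; since each such edge has an endpoint of weighted degree at most $\gamma\alpha$ and all weights are in $[\gamma,2\gamma)$, each light vertex has unweighted degree at most $\alpha$, so this is also $O(|V|\eps^{-5/3})$ edges. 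Hence one copy is $\tilde O(|V|/\eps^{5/3})$ bits, and the $\Theta(\log(1/\delta))$ copies multiply this by a $\text{poly}\log(1/\delta)$ factor, absorbed into the $\tilde O$. The main obstacle — really the only nontrivial point — is bounding the number of light-incident edges by $O(|V|\eps^{-5/3})$; this uses the S1-graph conditions (weights within a factor of $2$ and the definition of $\mathcal{S}$ via the threshold $\gamma\alpha$) to cap each light vertex's unweighted degree at $\alpha$. Everything else is bookkeeping over the already-proven Lemma~\ref{lem:S1-Estimator} and the median trick.
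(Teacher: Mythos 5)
Your proof is correct and follows essentially the same route as the paper: Lemma~\ref{lem:S1-Estimator} plus the median trick, with the space bound read off from Algorithm~\ref{alg:sketch-basic-simple}. The one step you fill in explicitly that the paper leaves implicit — bounding the light-incident edges by observing that $w_e \ge \gamma$ together with $\delta_u \le \gamma\alpha$ forces $d_u \le \alpha$ for every $u \in \mathcal{S}$ — is exactly the intended accounting, so no divergence or gap.
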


\subsection{General Graphs}
Now let us extend our result to general positively weighted simple graphs $G = (V, E, w)$. We now require $w_{\max}/w_{\min} = \text{poly}(\abs{V})$.

%
The following observation is due to the linearity of Laplacian.
\begin{observation}
  \label{ob:Laplacian}
  Given any simple graph $G = (V, E, w)$,  let $L$ be its Laplacian. Let $E_1, E_2, \ldots, E_k$ be a disjoint partition of $E$, and let $G_i = (V, E_i, w)$. Let $L_i$ be the Laplacian of $G_i$. We have $x^TLx = \sum_{i=1}^kx^TL_ix$ for any $x\in \mathbb{R}^{|V|}$.
\end{observation}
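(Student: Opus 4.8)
The plan is to reduce the identity to the additivity of the Laplacian as a matrix, which in turn follows from writing $L$ as a sum of rank-one ``edge Laplacians''. For an edge $e=(u,v)$ with weight $w_{u,v}$, let $L_e = w_{u,v}(\mathbf{e}_u - \mathbf{e}_v)(\mathbf{e}_u - \mathbf{e}_v)^T$, where $\mathbf{e}_u,\mathbf{e}_v \in \mathbb{R}^{|V|}$ are standard basis vectors; this is the standard fact that the Laplacian of any weighted graph on vertex set $V$ equals $\sum_{e \in E} L_e$, and in particular it does not depend on how we list the edges. Since all the graphs $G_i$ share the same vertex set $V$, each $L_i$ is an $|V|\times|V|$ matrix and $L_i = \sum_{e \in E_i} L_e$.

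Now, because $E_1,\dots,E_k$ is a \emph{disjoint} partition of $E$, every edge $e\in E$ lies in exactly one part $E_i$, so
\[
L \;=\; \sum_{e \in E} L_e \;=\; \sum_{i=1}^{k} \sum_{e \in E_i} L_e \;=\; \sum_{i=1}^{k} L_i .
\]
Multiplying on the left by $x^T$ and on the right by $x$ and using linearity of the quadratic form gives $x^T L x = \sum_{i=1}^k x^T L_i x$ for every $x \in \mathbb{R}^{|V|}$. (Equivalently, one can bypass the matrix identity and argue directly with the quadratic form: $x^T L x = \sum_{(u,v)\in E} w_{u,v}(x_u-x_v)^2$, and splitting this sum according to the partition of $E$ yields the claim, since $\sum_{(u,v)\in E_i} w_{u,v}(x_u-x_v)^2 = x^T L_i x$.)

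There is no genuine obstacle here; the observation is immediate from linearity. The only points that require any care are bookkeeping ones: that each $G_i$ is taken on the full vertex set $V$ (so the matrices $L_i$ are all of the same dimension and can be added), and that the partition is into pairwise disjoint edge sets whose union is exactly $E$, so that each edge's contribution is counted once and only once.
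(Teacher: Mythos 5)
Your proof is correct and matches what the paper intends: the paper simply attributes the observation to ``the linearity of Laplacian'' without giving further detail, and your decomposition $L=\sum_{e\in E}L_e$ (or equivalently the direct splitting of $\sum_{(u,v)\in E}w_{u,v}(x_u-x_v)^2$) is exactly the standard argument behind that remark.
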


Our high-level idea is to reduce general graphs to S1-graphs. Based on Observation~\ref{ob:Laplacian}, we can first partition the edge set $E$ into $E_1, \ldots, E_k\ (k = \Theta(\log \abs{V}))$ such that for any $e \in E_i$ we have $w_{e} \in [2^{i-1}w_{\min}, 2^iw_{\min})$, and then sketch each subgraph $G_i$ separately. Finally, at the time of a query, we simply add all estimators $I_{G_i}$ together. Thus it suffices to focus on a  graph with all weight $w_e \in [\gamma, 2\gamma) $ for some $\gamma > 0$.

We next partition each subgraph $G_i$ further so that each component $P$ satisfies $h_P \ge c_{\alpha}\eps^{\frac{1}{3}}$. Once this property is established, we can use Algorithm~\ref{alg:sketch-basic-simple} to sketch each $P$ separately. We describe this preprocessing step in Algorithm~\ref{alg:preprocessing}.

\begin{algorithm}[t]
\KwIn{A graph $G = (V, E, w)$ such that for any $e \in E$, $w_e \in [\gamma, 2\gamma)$; a parameter $h > 0$}
\KwOut{
A set $\mathcal{P}$ of edge disjoint components of $G$ such that for each $P \in \mathcal{P}$, $h_P > h$; and a graph $Q$ induced by the rest of the edges in $G$.
}
$\mathcal{P} \gets \{ G \}$, $Q \gets \emptyset$\;
\While {$\exists$ $P \in \mathcal{P}$ such that Cheeger's constant $h_P \leq h$} {
  Find an arbitrary cut $(S, \overline S)$ in $P$, such that $\Phi(S) \leq h$\;
  Replace $P$ with its two subgraphs $P(S)$ and $P(\overline{S})$ in $\mathcal{P}$\;
  Add all edges in the cut $(S, \overline S)$ into $Q$\;
}

\Return{$(\mathcal{P}$, $Q$)}\;
\caption{{\bf Preprocessing}($G, h$)}
\label{alg:preprocessing}
\end{algorithm}

The $Q$ returned by Algorithm~\ref{alg:preprocessing} is a set of cut edges we will literally keep. The following lemma bounds the size of $Q$. The proof is folklore, and we include it for completeness.
\begin{lemma}
  \label{lem:boundQ}
  For any positively weighted graph $G = (V, E, w)$ such that for any $e \in E$, $w_e \in [\gamma, 2\gamma)$ for some $\gamma > 0$, the number of edges of $Q$ returned by Algorithm \ref{alg:preprocessing} \textbf{Preprocessing}($G, h$) is bounded by $O(h |E| \log |E|)$.
\end{lemma}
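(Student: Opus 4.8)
The plan is to follow the recursion tree of Algorithm~\ref{alg:preprocessing} and charge the weight of each removed cut to the vertices on the \emph{lighter} side of that cut, then show that no vertex can land on a lighter side too many times. First reduce to a weighted statement: since every edge weight lies in $[\gamma,2\gamma)$, the number of edges removed along any single cut $(S,\overline S)$ of a component $P$ is at most $w_P(S,\overline S)/\gamma$, so it suffices to prove that the total weight of all cuts removed over the whole execution is $O(h\gamma\abs{E}\log\abs{E})$, equivalently $O(h\,\vol_G(V)\log\abs{E})$, using $\vol_G(V)=2\sum_e w_e\in[2\gamma\abs{E},4\gamma\abs{E})$.

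Next the charging step. Consider one iteration that splits $P$ along $(S,\overline S)$ with $\Phi_P(S)\le h$; assume without loss of generality $\vol_P(S)\le\vol_P(\overline S)$, so that $\vol_P(S)\le\vol_P(V(P))/2$ because $\vol_P(S)+\vol_P(\overline S)=\vol_P(V(P))$. The removed weight is $w_P(S,\overline S)\le h\,\vol_P(S)=h\sum_{u\in S}\delta_u(P)$, so we assign a charge of $h\,\delta_u(P)$ to each $u\in S$ and thereby pay fully for this cut. Summing over all iterations, the total removed weight is at most $\sum_{u\in V} h\bigl(\sum_{P:\,u\in S_P}\delta_u(P)\bigr)$, where the inner sum ranges over those components $P$ whose cut placed $u$ on its lighter side $S_P$. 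To bound the inner sum, fix $u$ and follow the nested chain $G=Q_0\supseteq Q_1\supseteq\cdots$ of components containing $u$ ($Q_{i+1}$ being the vertex‑induced side of $Q_i$'s cut containing $u$). Each cut only deletes edges, so $\vol(Q_0)\ge\vol(Q_1)\ge\cdots$; and whenever $u$ lies on the lighter side of $Q_i$'s cut, the displayed inequality gives $\vol(Q_{i+1})\le\vol_{Q_i}(S)\le\vol(Q_i)/2$. Since any component that actually gets cut contains an edge and hence has volume $\ge 2\gamma$, while $\vol(Q_0)=\vol_G(V)<4\gamma\abs{E}$, the vertex $u$ can be on a lighter side at most $\log_2(\vol_G(V)/2\gamma)+1=O(\log\abs{E})$ times. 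As $\delta_u(P)\le\delta_u(G)$ always, the inner sum is $O(\log\abs{E})\cdot\delta_u(G)$, and therefore the total removed weight is $h\cdot O(\log\abs{E})\sum_u\delta_u(G)=O(h\,\vol_G(V)\log\abs{E})=O(h\gamma\abs{E}\log\abs{E})$, which yields the claimed $O(h\abs{E}\log\abs{E})$ edge bound.

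The only real subtlety I anticipate is the design of the charging: charging a cut's weight to \emph{both} sides would be too lossy, so one must charge only to the lighter side and then certify that each vertex appears there only logarithmically often. That certification rests on two facts that the $[\gamma,2\gamma)$ weight restriction makes clean — namely that a vertex's component volume is non‑increasing along the recursion and halves each time that vertex is on the light side, and that volumes stay pinned between $\Theta(\gamma\abs{E})$ at the top and $\ge 2\gamma$ for any component that is still being cut — so everything reduces to the standard ``$O(\log)$ halvings'' bound. Handling components that are not connected (e.g.\ a side consisting of isolated vertices) needs only the remark that such vertices contribute $0$ to all volumes and are never cut again, so they never receive a charge.
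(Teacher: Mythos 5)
Your proof is correct and follows essentially the same charging strategy as the paper's: charge the removed cut weight to the lighter side of each cut, then observe that any fixed object (a vertex, weighted by degree, in your version; an edge in the paper's) can land on the lighter side only $O(\log\abs{E})$ times because the component volume at least halves whenever that happens. The only cosmetic difference is that you account via $\sum_{u}\delta_u$ while the paper phrases the charge in terms of edges, but since $\vol_P(S)=\sum_{u\in S}\delta_u(P)$ these are the same bookkeeping. If anything your write-up is slightly tighter: the paper asserts $\vol_G(V)=2\abs{E}$, which is only literal when $\gamma=1$, whereas you correctly track $\vol_G(V)\in[2\gamma\abs{E},4\gamma\abs{E})$ and $\vol\ge 2\gamma$ for any still-cuttable component so that the number of halvings is still $O(\log\abs{E})$.
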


\begin{proof}
In Algorithm \ref{alg:preprocessing}, we recursively split the graph $G = (V, E, w)$ into connected components until for every component $P = (V_P, E_P)$, its Cheeger constant $h_P$ = $\inf_{S\subset V_P}\Phi_C(S) > h$. Consider a single splitting step: We find a cut $(S, \bar{S})$ with $\vol_P(S) \le \vol_P(\bar{S})$ in $P$ such that $\Phi_P(S) = \frac{w_P(S, \bar{S})}{\vol_P(S)} \leq h$. We can think in this step each edge in $\vol_P(S)$ contributes at most $h$ edges to $Q$ on average, while edges in $\vol_P(\bar{S})$ contribute nothing to $Q$. We call $S$ the {\em Smaller-Subset}. 

By the definition of volume, we have $\vol_P(V_P) = \vol_P(S\cup\overline{S}) = \vol_P(S) + \vol_P(\overline{S}) \geq 2 \vol_P(S)$, and $\vol_G(V) = 2\abs{E}$. Thus in the whole recursion process, each edge will appear at most $O(\log \abs{E})$ times in Smaller-Subsets, hence will contribute at most $O(h \log \abs{E})$ edges to $Q$. Therefore the number of edges of $Q$ is bounded by $O(h \abs{E} \log \abs{E})$ words.
\end{proof}
  
Now we show the main algorithm for general graphs and analyze its performance. The algorithm is described in Algorithm~\ref{alg:sketch-basic-general}.

\begin{algorithm}[t]
\KwIn{$G = (V, E, w)$ with all weights in $[w_{\min}, w_{\max}]$; a quality control parameter $\eps$}
\KwOut{A $(1+\eps, 0.01)$-spectral-sketch $\sk{G}$ of $G$}
Edge-disjointly partition $G$ into $\mathcal{H} = \{H_1, \ldots, H_{k}\}$ s.t. all edges in $H_i$ have weights in $[2^{i-1}w_{\min}, 2^iw_{\min})$\;
$\sk{G} \gets \emptyset$\;
\ForEach{$H \in \mathcal{H}$} {
  $(\mathcal{P}, Q) \gets $ {\bf Preprocessing}($H, \alpha\eps^2$)\;
  Add $Q$ into $\sk{G}$\;
  \For {$P \in \mathcal{P}$} {
    Add \textbf{Spectral-S1}($P, \eps$) into $\sk{G}$\;
  }
}

\Return{$\sk{G}$}\;
\caption{{\bf Spectral-Basic}($G, \eps$)}
\label{alg:sketch-basic-general}
\end{algorithm}

The following lemma summarize the functionality of Algorithm~\ref{alg:sketch-basic-general}.

\begin{lemma}
\label{lem:basic-general}
Given a graph $G = (V, E, w)$, let $\sk{G} \gets$~\textbf{Spectral-Basic}($G, \eps$), then for any given $x \in \mathbb{R}^{|V|}$, $\sk{G}$ can be used to construct an unbiased estimator $I_G$ which gives a $(1+\eps, 0.01)$-approximation to $x^T L(G) x$. The sketch $\sk{G}$ uses $\tilde{O}(\eps^{\frac{1}{3}}|E| + |V|/\eps^{\frac{5}{3}})$ bits.
\end{lemma}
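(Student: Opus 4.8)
The plan is to assemble Lemma~\ref{lem:basic-general} from the three ingredients already in hand: the linearity of the Laplacian (Observation~\ref{ob:Laplacian}), the guarantee of \textbf{Preprocessing} together with the size bound of Lemma~\ref{lem:boundQ}, and the per-component analysis of \textbf{Spectral-S1} (Lemma~\ref{lem:S1-Estimator} and Theorem~\ref{thm:basic-simple}). First I would fix the query vector $x$ and unwind the algorithm. \textbf{Spectral-Basic} writes $E = \bigsqcup_{i=1}^{k} E(H_i)$ with $k = \Theta(\log|V|)$ weight classes, and then for each $H_i$ calls \textbf{Preprocessing}$(H_i, \alpha\eps^2)$, which returns edge-disjoint components $\mathcal{P}_i$ with $h_P > \alpha\eps^2 = c_\alpha\eps^{1/3}$ for every $P \in \mathcal{P}_i$, plus the cut-edge graph $Q_i$ whose edges (with weights) are stored verbatim. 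Each $P \in \mathcal{P}_i$ inherits from $H_i$ the property that all its weights lie in $[2^{i-1}w_{\min}, 2^i w_{\min})$, i.e.\ within a factor of $2$; combined with $h_P > c_\alpha\eps^{1/3}$, every such $P$ is an S1-graph, so \textbf{Spectral-S1}$(P,\eps)$ and its estimator \eqref{eq:estimator} apply to it.

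For unbiasedness, define $I_G = \sum_{i=1}^{k}\bigl(x^T L(Q_i) x + \sum_{P\in\mathcal{P}_i} I_P\bigr)$, where $x^T L(Q_i) x = \sum_{(u,v)\in Q_i}(x_u-x_v)^2 w_{u,v}$ is computed exactly from the stored cut edges and $I_P$ is the \textbf{Spectral-S1} estimator for $P$. Two applications of Observation~\ref{ob:Laplacian} --- first $x^T L(G) x = \sum_i x^T L(H_i) x$, then $x^T L(H_i) x = x^T L(Q_i) x + \sum_{P\in\mathcal{P}_i} x^T L(P) x$ --- together with $\E[I_P] = x^T L(P) x$ from Lemma~\ref{lem:S1-Estimator} give $\E[I_G] = x^T L(G) x$. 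For concentration, the estimators $\{I_P\}$ use independent randomness, so $\var{I_G} = \sum_i \sum_{P\in\mathcal{P}_i}\var{I_P}$. The proof of Lemma~\ref{lem:S1-Estimator} actually establishes $\var{I_P} \le \tfrac{8}{\alpha^6\eps^8}(x^T L(P) x)^2 = \tfrac{8}{c_\alpha^6}\eps^2 (x^T L(P) x)^2$ (using $\alpha = c_\alpha\eps^{-5/3}$). Since the quadratic forms $x^T L(Q_i) x$ and $x^T L(P) x$ are all nonnegative and sum to $x^T L(G) x$, we get $\sum_i\sum_P (x^T L(P) x)^2 \le \bigl(\sum_i\sum_P x^T L(P) x\bigr)^2 \le (x^T L(G) x)^2$, hence $\var{I_G} \le \tfrac{8}{c_\alpha^6}\eps^2 (x^T L(G) x)^2$; taking $c_\alpha$ large enough, Chebyshev's inequality makes $I_G$ a $(1+\eps,0.01)$-approximation of $x^T L(G) x$.

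For the space bound, Lemma~\ref{lem:boundQ} bounds $|E(Q_i)|$ by $O(\alpha\eps^2 |E(H_i)|\log|E|)$, so $\sum_i |E(Q_i)| = O(\alpha\eps^2 |E|\log|E|) = \tilde{O}(\eps^{1/3}|E|)$ since $\alpha\eps^2 = c_\alpha\eps^{1/3}$. For the S1-sketches, the members of $\mathcal{P}_i$ are vertex-induced subgraphs on a partition of $V$, so $\sum_{P\in\mathcal{P}_i}|V(P)| = |V|$; by Theorem~\ref{thm:basic-simple} each contributes $\tilde{O}(|V(P)|/\eps^{5/3})$ bits, for a total over all $k = \tilde{O}(1)$ weight classes of $\tilde{O}(|V|/\eps^{5/3})$ bits. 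Adding the two parts (plus the degrees, which are already counted inside \textbf{Spectral-S1}) gives $\tilde{O}(\eps^{1/3}|E| + |V|/\eps^{5/3})$.

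The only real subtlety --- and the step I would double-check most carefully --- is that it is the \emph{variance} bound inside the proof of Lemma~\ref{lem:S1-Estimator}, not its stated high-probability conclusion, that composes across the (possibly many) components, and that nonnegativity of the Laplacian quadratic forms is exactly what lets $\sum(x^TL(P)x)^2$ be controlled by $(x^TL(G)x)^2$ with no union bound and no loss. One should also note the mild standing assumption that $\eps$ is small enough that $c_\alpha\eps^{1/3} < 1$ so that \textbf{Preprocessing} behaves sensibly; for larger $\eps$ one simply stores $G$ outright.
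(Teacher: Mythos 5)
Your proof matches the paper's argument essentially step for step: same decomposition into weight classes and Cheeger components, same combined estimator $I_G=\sum_i\bigl(x^TL(Q_i)x+\sum_{P\in\mathcal{P}_i}I_P\bigr)$, same use of independence to add variances, same use of nonnegativity of the Laplacian quadratic forms to bound $\sum_P (x^TL(P)x)^2 \le (x^TL(G)x)^2$, Chebyshev, and the same size accounting via Lemma~\ref{lem:boundQ} and Theorem~\ref{thm:basic-simple}. The subtlety you flag---that one must compose the per-component \emph{variance} bound from the proof of Lemma~\ref{lem:S1-Estimator} rather than its stated probabilistic conclusion---is exactly what the paper's proof does (if somewhat tersely), so there is no gap.
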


\begin{proof}
In Algorithm~\ref{alg:sketch-basic-general}, $G$ is partitioned into a set of edge disjoint components $\mathcal{P} = \{P_1, \ldots, P_t\}$, and we build a sketch $\sk{P_i}$ for each $P_i \in \mathcal{P}$ from which we can construct an unbiased estimator $I_{P_i}$ for $x^T L(P_i) x$ with variance bounded by $O(\eps^2 (x^T L(P_i) x)^2)$ according to Lemma~\ref{lem:S1-Estimator}. Moreover, we have stored all edges between these components; let $Q$ be the induced subgraph of these edges.  Our  estimator to $x^T L(G) x$ is
\begin{equation}
\label{eq:est-basic-general}
I_G = \sum_{i = 1}^t I_{P_i} + x^T L(Q) x,
\end{equation}
where $I_{P_i}$ defined in Equation~(\ref{eq:estimator}) is an unbiased estimator of $x^T L(P_i) x$.
By the linearity of Laplacian (Observation~\ref{ob:Laplacian}), $I_G$ is an unbiased estimator of $x^T L(G) x$. Now consider its variance:
\begin{eqnarray*}
\var{I_G} &=& \var{\sum_{1\leq i\leq t} I_{P_i} + x^T L(Q) x} \quad \text{(due to the independence of $P_i$'s)}\\
&\leq& O(\epsilon^2)\sum_{1\leq i \leq t}\left(x^T L(P_i) x\right)^2 \\
&\leq& O(\epsilon^2) \left(\sum_{1\leq i \leq t}x^T L(P_i) x+x^T L(Q) x \right)^2 \quad \text{($L(P_i)$ and $L(Q)$ are  positive semidefinite)}\\
&=& O(\epsilon^2) \left(x^T L(G) x\right)^2.
\end{eqnarray*}
The correctness follows from a Chebyshev's inequality.

Now we bound the size of the sketch $\sk{G}$. Consider a particular $H$ at Line $3$ of Algorithm~\ref{alg:sketch-basic-general}. For each $P = (V_P, E_P) \in \mathcal{P}$, the size of $\sk{P}$ by running {\em Spectral-S1$(P, \eps)$} at Line $7$ is bounded by $\tilde{O}(\abs{V_P}/\eps^{\frac{5}{3}})$ bits (Theorem~\ref{thm:basic-simple}); and for the remaining subgraph $Q = (V_Q, E_Q)$, $\sk{Q}$ is bounded by $\tilde{O}(\eps^{\frac{1}{3}} \abs{E_Q})$ bits (Lemma~\ref{lem:boundQ}). Thus 
\begin{eqnarray*}
\text{size}(\sk{P_i}) & = & \text{size}(\sk{Q}) + \sum_{P \in \mathcal{P}} \text{size}(\sk{P}) \\
& \le & \tilde{O}\left(\eps^{\frac{1}{3}} \abs{E_Q} + \sum_{P \in \mathcal{P}}  \abs{V_P} /\eps^{\frac{5}{3}}\right) \\
& \le & \tilde{O}\left(\eps^{\frac{1}{3}} \abs{E} + \abs{V}/\eps^{\frac{5}{3}}\right) \text{bits}. \quad \quad (\text{$\{P \in \mathcal{P}\}$ are vertex-disjoint})
\end{eqnarray*}
Since there are $k = \Theta(\log\abs{V})$ of $H_i$'s in $\mathcal{H}$, the size of $\sk{G}$ is bounded by
 $\tilde{O}\left(\abs{V}/\eps^{\frac{5}{3}} + \eps^{\frac{1}{3}} \abs{E}\right) \cdot \log\abs{V} = \tilde{O}\left(\abs{V}/\eps^{\frac{5}{3}} + \eps^{\frac{1}{3}} \abs{E}\right)$ bits.
\end{proof}

We conclude this section with the following theorem.
\begin{theorem}
\label{thm:basic-general}
There is a sketching algorithm which given an undirected positively weighted graph $G = (V, E, w)$ with $w_{\max}/w_{\min} = \text{poly}(\abs{V})$, outputs a $(1+\eps, \delta)$-spectral-sketch of size $\tilde{O}(\abs{V}/\eps^{\frac{5}{3}})$.
\end{theorem}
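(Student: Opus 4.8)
The plan is to derive Theorem~\ref{thm:basic-general} by composing the spectral sparsifier of Lemma~\ref{lem:spielman} with the sketch of Lemma~\ref{lem:basic-general}, and then amplifying the success probability by the median trick. First I would replace the input graph $G$ by a sparser proxy: run the algorithm of Lemma~\ref{lem:spielman} with accuracy parameter $\eps/3$ to obtain $\tilde G = (V, \tilde E, \tilde w)$. By the second clause of that lemma, $\tilde w_{\max}/\tilde w_{\min} = \text{poly}(\abs{V})$, so $\tilde G$ still meets the hypothesis needed to run \textbf{Spectral-Basic}; and since its sketch has size $\tilde{O}(\abs{V}/\eps^2)$ bits while each stored edge costs only $\tilde{O}(1)$ bits (vertex labels take $O(\log\abs{V})$ bits, weights are polynomially bounded), $\tilde G$ has $\abs{\tilde E} = \tilde{O}(\abs{V}/\eps^2)$ edges. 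The proof of Lemma~\ref{lem:spielman} constructs $\tilde G$ via the Batson--Spielman--Srivastava sparsifier, so in fact $x^T L(\tilde G) x = (1\pm\eps/3)\, x^T L(G) x$ simultaneously for all $x$; alternatively one may invoke only the ``for each'' guarantee and pay an extra $\delta$ in a union bound, which the amplification step below absorbs.

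Next I would run \textbf{Spectral-Basic}$(\tilde G, \eps/3)$ of Lemma~\ref{lem:basic-general} on this proxy, and take the resulting object (plus a constant amount of bookkeeping recording that it was built on $\tilde G$) as $\sk{G}$, with reconstruction function simply returning the estimator $I_{\tilde G}$ that Lemma~\ref{lem:basic-general} produces. By that lemma, for any fixed $x$, $I_{\tilde G}$ is an unbiased $(1+\eps/3, 0.01)$-approximation of $x^T L(\tilde G) x$, and its size is
\[
\tilde{O}\!\left(\eps^{1/3}\abs{\tilde E} + \abs{V}/\eps^{5/3}\right) = \tilde{O}\!\left(\eps^{1/3}\cdot \abs{V}/\eps^{2} + \abs{V}/\eps^{5/3}\right) = \tilde{O}(\abs{V}/\eps^{5/3})
\]
bits. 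The entire point of sparsifying first is exactly this collapse: the otherwise dominant $\eps^{1/3}\abs{E}$ term of Lemma~\ref{lem:basic-general} becomes $\tilde{O}(\abs{V}/\eps^{5/3})$ once $\abs{E}$ is replaced by $\tilde{O}(\abs{V}/\eps^{2})$.

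For correctness I would chain the two approximations. With probability at least $0.99$ we have $(1-\eps/3)\, x^T L(\tilde G) x \le I_{\tilde G} \le (1+\eps/3)\, x^T L(\tilde G) x$, and deterministically $(1-\eps/3)\, x^T L(G) x \le x^T L(\tilde G) x \le (1+\eps/3)\, x^T L(G) x$; since $(1+\eps/3)^2 \le 1+\eps$ and $(1-\eps/3)^2 \ge 1-\eps$ for all $\eps \le 1$, it follows that $I_{\tilde G}$ is a $(1+\eps, 0.01)$-approximation of $x^T L(G) x$. To bring the failure probability down from the constant $0.01$ to an arbitrary $\delta$, I would keep $O(\log(1/\delta))$ independent copies of the whole construction and output the median of their estimators; a Chernoff bound over the independent trials yields success probability $1-\delta$, and the $O(\log(1/\delta))$ factor is swallowed by the $\tilde{O}(\cdot)$ notation, so the final space is $\tilde{O}(\abs{V}/\eps^{5/3})$ bits.

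There is no genuinely hard step here: everything reduces to the two lemmas already established. The only points needing a little care are (i) checking that the sparsified graph still satisfies the hypotheses of Lemma~\ref{lem:basic-general} --- in particular that its weight ratio remains polynomially bounded, which is precisely the second clause of Lemma~\ref{lem:spielman}, and that its edge \emph{count} (not merely its bit-size) is $\tilde{O}(\abs{V}/\eps^{2})$ --- and (ii) tracking the constant-factor rescalings of $\eps$ so that the composed multiplicative error stays within $1\pm\eps$. I expect (i) to be the closest thing to an obstacle, and it is dispatched by the above remarks.
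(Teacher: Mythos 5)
Your proposal matches the paper's own proof: sparsify via Lemma~\ref{lem:spielman} to bring the edge count down to $\tilde{O}(\abs{V}/\eps^2)$, run \textbf{Spectral-Basic} so that the $\eps^{1/3}\abs{E}$ term in Lemma~\ref{lem:basic-general} collapses to $\tilde{O}(\abs{V}/\eps^{5/3})$, and amplify to confidence $1-\delta$ by the median trick. Your treatment is in fact slightly more careful than the paper's, which silently absorbs the constant-factor rescaling of $\eps$ needed when chaining the sparsifier's and the sketch's approximation guarantees.
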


\begin{proof}
The algorithm is as follows: we first run the spectral sparsification algorithm in \cite{BSS14}, obtaining a graph $\tilde{G} = (V, \tilde{E}, \tilde{w})$. By Lemma~\ref{lem:spielman} we have $|\tilde{E}| = \tilde{O}(\abs{V}/\eps^2)$ and $\tilde{w}_{\max}/\tilde{w}_{\min} = \text{poly}(\abs{V})$. We then run Algorithm~\ref{alg:sketch-basic-general}, getting a $(1+\eps, 0.01)$-spectral-sketch of size 
$
\tilde{O}\left(\abs{V}/\eps^{\frac{5}{3}} + \eps^{\frac{1}{3}} |\tilde{E}|\right) = \tilde{O}(\abs{V} / \eps^{\frac{5}{3}})
$. We can again run $\log(1/\delta)$ independent estimators and return the median of them to boost the success probability to $1 - \delta$.
\end{proof}

\section{An Improved Sketching Algorithm}
\label{sec:improve}
In this section, we further reduce the space complexity of the sketch to $\tilde O( {|V|}/{\eps^{\frac{8}{5}}} )$. At a high level, such an improvement is achieved by partitioning the graph into more subgroups (compared with a hierarchical partition on weights, and $\mathcal{S}(G)$ and $\mathcal{L}(G)$ for each weight class in the basic approach), in each of which vertices have similar unweighted degrees {\em and} weighted degrees. An estimator based on a set of sampled edges from such groups will have smaller variance. This finer partition, however, will introduce a number of technical subtleties, as we will describe below.

We set the constant $\beta = c_\beta \eps^{-\frac{8}{5}}$ throughout this section where $c_\beta$ is a constant.

\subsection{Special Graphs}
We first consider a class of simple graphs. 
\begin{definition}[S2-graph]
\label{def:S2-graph}
We say an undirected weighted graph $G = (V, E, w)$ is an {\em S2-graph} (reads ``simple type-2 graph") if we can assign directions to its edges in a certain way, getting a directed graph $\vec{G} = (V, \vec{E}, w)$
satisfying the following.
\begin{enumerate}
\item All weights $\{w_e\ |\ e \in \vec{E}\}$ are within a factor of $2$, i.e., for any $e \in \vec{E}$, $w_e \in [\gamma, 2\gamma)$ for some constant $\gamma > 0$.

\item For each $u \in V$, $\uwoutdeg{u} \in [2^\kappa\beta, 2^{\kappa+1}\beta)$, where $2^\kappa\beta \le 1/\eps^2$.

\end{enumerate}
We call $\vec{G}$ the {\em buddy} of $G$. Note that $\vec{G}$ is not necessarily unique, and we just need to consider an arbitrary but fixed one. In this section we will assume that we can obtain such a buddy directed graph $\vec{G}$ ``for free", and will not specify the concrete algorithm.  Later when we deal with general graphs we will discuss how to find such a direction scheme.
\end{definition}

We still make use of Algorithm~\ref{alg:preprocessing} to partition the graph $G$ into components such that the Cheeger constant of each component is larger than $h = \beta \eps^2$. One issue here is that, after storing and removing those cut edges (denoted by $Q$ in Algorithm~\ref{alg:preprocessing}), the second property of S2-graph may not hold, since the out-degree of some vertices in $G$'s buddy graph $\vec{G}$ will be reduced. Fortunately, the following lemma shows that the number of vertices whose degree will be reduced more than half is small, and we can thus afford to store all their out-going edges. The out-degree of the remaining vertices is within a factor of $4$, thus we can still effectively bound the variance of our estimator. 

\begin{lemma}
\label{alg:improve-preprocessing}
If we run Algorithm~\ref{alg:preprocessing} on an S2-graph $G = (V, E, w)$ with $h = 2^{-\kappa}$, then
\begin{enumerate}
  \item At most $\tilde O(\beta |V|)$ cut edges (i.e., $Q$) will be removed from $G$.
  
  \item There are at most $\tilde{O}(2^{1-{\kappa}}|V|)$ vertices in $G$'s buddy graph $\vec{G}$ which will reduce their out-degrees by more than a half after the removal of $Q$.
\end{enumerate}
\end{lemma}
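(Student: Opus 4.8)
The plan is to reduce both claims to the edge count of an S2-graph together with Lemma~\ref{lem:boundQ}. First I would observe that an S2-graph is sparse in a controlled way: by Property~2 of Definition~\ref{def:S2-graph}, every $u\in V$ satisfies $\uwoutdeg{u}\in[2^\kappa\beta,2^{\kappa+1}\beta)$, and since each directed edge of the buddy graph $\vec G$ is the out-edge of exactly one vertex, $|E|=\sum_{u\in V}\uwoutdeg{u}=\Theta(2^\kappa\beta|V|)$.

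For the first claim, Property~1 of Definition~\ref{def:S2-graph} says that all edge weights of $G$ lie within a factor of $2$, so Lemma~\ref{lem:boundQ} applies directly to the call \textbf{Preprocessing}$(G,2^{-\kappa})$ and gives $|Q|=O(2^{-\kappa}|E|\log|E|)$. Substituting $|E|=\Theta(2^\kappa\beta|V|)$ cancels the $2^\kappa$ factors and leaves $|Q|=O(\beta|V|\log|E|)=\tilde{O}(\beta|V|)$.

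For the second claim, I would charge removed edges to their tails in $\vec G$. If a vertex $u$ loses more than half of its out-edges in $\vec G$, then, since it began with $\uwoutdeg{u}\ge 2^\kappa\beta$ of them, more than $2^{\kappa-1}\beta$ of its out-edges must belong to $Q$; because every edge of $Q$, oriented as in $\vec G$, has exactly one tail, these contributions are disjoint, so the number of such vertices is at most $|Q|/(2^{\kappa-1}\beta)=\tilde{O}(\beta|V|)/(2^{\kappa-1}\beta)=\tilde{O}(2^{1-\kappa}|V|)$. There is essentially no hard step here; the only points needing care are checking that the hypotheses of Lemma~\ref{lem:boundQ} are met (weights within a factor of $2$, which is exactly Property~1), and, in the second claim, attributing each removed undirected edge to a single endpoint — its tail in the buddy orientation — so that the per-vertex loss of $\Omega(2^\kappa\beta)$ edges is not double-counted.
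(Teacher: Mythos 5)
Your proof is correct and follows essentially the same route as the paper: apply Lemma~\ref{lem:boundQ} with $|E| = \Theta(2^{\kappa}\beta |V|)$ and $h = 2^{-\kappa}$ to bound $|Q|$, then charge removed edges to their tails in $\vec{G}$ to bound the number of vertices that lose more than half their out-degree. Your write-up is in fact a bit cleaner, since it consistently uses the unweighted out-degree $\uwoutdeg{u}$ as in Definition~\ref{def:S2-graph}, whereas the paper's proof momentarily writes the weighted out-degree $\woutdeg{u}$ in the second part.
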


\begin{proof}
Since $|E| = O(2^{\kappa} \beta |V|)$, $h = 2^{-{\kappa}}$ and $2^{\kappa} \beta = \tilde O(\frac{1}{\eps^2})$, Lemma \ref{lem:boundQ} directly gives the first part. For the second part, note that for each vertex $u$ we have $\woutdeg{u} \ge 2^\kappa \beta$, thus we need to remove at least $2^{\kappa - 1} \beta$ edges to reduce $\woutdeg{u}$ to $2^{\kappa - 1} \beta$. Therefore the number of such vertices is at most $\tilde{O}(\beta\abs{V}/2^{\kappa - 1} \beta) = \tilde{O}(2^{1-{\kappa}}|V|)$.
\end{proof}

For each component $P = (V_P, E_P)$ after running Algorithm~\ref{alg:preprocessing}, let $\vec{P} = (V_P, \vec{E}_P)$ be its buddy directed graph. Slightly abusing the notation, define $\S(\vec{P}) = \{(u, v) \in \vec{E}_P\ |\ \uwoutdegP{u} < 2^{\kappa - 1}\beta\}$, and $\L(\vec{P}) = \{(u, v) \in \vec{E}_P\ |\ \uwoutdegP{u} \ge 2^{\kappa - 1}\beta\}$. We will again omit ``$(\vec{P})$" or ``$(P)$" when there is no confusion. 

The sketch for an S2-graph $G$ is constructed using Algorithm~\ref{alg:improved-simple}. 

\begin{algorithm}[t]
\KwIn{An S2-graph $G$ and a parameter $\eps$}
\KwOut{A $(1+\eps, 0.01)$-spectral-sketch of $G$}
$\{\mathcal{P}, Q\} \gets$ {\bf Preprocessing}($G, \beta \eps^2$)\;
\ForEach{$P = (V_P, E_P, w) \in \mathcal{P}$}{
Let $\vec{P} = (V_P, \vec{E}_P, w)$ be its buddy directed graph\;

\ForEach{$u \in V_P$}{Add $\windegP{u}$ and $\delta_u(P)$  to $\sk{G}$\;}
Add $\S(\vec{P})$ to $\sk{G}$\;

\ForEach{$u \in V_P$}{
Sample $\beta = c_\beta \eps^{-\frac{8}{5}}$ edges with replacement from $\{(v, u) \in \L\}$, where the probability that $(v, u)$ is sampled is $w_{v,u}/\windegP{u}$. Add sampled edges to $\sk{G}$\;
}

Add $Q$ to $\sk{G}$\;
}
\Return $\sk{G}$
\caption{{\bf Spectral-S2}($G, \eps$)}
\label{alg:improved-simple}
\end{algorithm}

It is easy to see that the size of $\sk{G}$ is bounded $\tilde{O}(2^{1-\kappa}\abs{V_P}) \cdot 2^{\kappa-1}\beta + \tilde{O}(\beta \abs{V})  + \tilde{O}(\beta \abs{V}) = \tilde{O}(\beta \abs{V})$, where the first term in LHS is due to the definition of $\S(\vec{P})$ and the second property of Lemma~\ref{alg:improve-preprocessing}.

Let $Y_u^v$ be the random variable denoting the number of times (directed) edge $(v, u)$ 
is sampled when we process the vertex $u$. Clearly, $\E[Y_u^v] = \frac{\beta w_{v,u}}{\windegP{u}} $ and $\var{Y_u^v} \leq \frac{\beta w_{v, u}}{\windegP{u}}$. For a given $x \in \mathbb{R}^{\abs{V_P}}$, we construct the following estimator for each component $P$ using $\sk{G}$. 

\begin{equation}
  \label{eq:estimator-P}
  I_P = \sum_{u \in V_P} x_u^2  \delta_{u}(P) - 2 \sum_{(u, v) \in \S}  x_u x_v w_{u,v} 
        - 2 \sum_{u \in V_P} \frac{\windegP{u}}{\beta}\sum_{(v, u) \in \L} x_u x_v Y_u^v 
\end{equation}
Similar to the analysis in Section~\ref{sec:basic-simple}, it is easy to show that $I_P$ is an unbiased estimator of $x^T L(P) x$ by noticing

\begin{eqnarray*}
   x^T L(P) x 
   &=& \sum_{(u, v) \in \vec{E}_P}(x_u - x_v)^2 w_{u, v}\\
   & = & \sum_{u \in V_P} x_u^2 \delta_u(P) - 2 \sum_{(u, v) \in \S}  x_u x_v w_{u,v} 
        - 2 \sum_{(u, v) \in \L}  x_u x_v w_{u,v}.
\end{eqnarray*} 
We next bound the variance
\begin{eqnarray*}
  \var{I_P}  &=& \var{2\sum_{u\in V_P}\frac{\windegP{u}}{\beta} \sum_{(v, u) \in \L} x_ux_vY_u^v}\\
             &=& 4\sum_{u\in V_P}\frac{\left(\windegP{u}\right)^2}{\beta^2} \sum_{(v, u) \in \L} x_u^2x_v^2\var{Y_u^v}
              \\
             &\leq& 4\sum_{u\in V_P}\frac{\left(\windegP{u}\right)^2}{\beta^2} \sum_{(v, u) \in \L} x_u^2x_v^2\frac{\beta w_{v, u}}{\windegP{u}}
                \quad \left(\var{Y_u^v} \leq \frac{\beta w_{v,u}}{\windegP{u}}\right)\\
             &=& 4\sum_{u\in V_P}x_u^2\frac{\windegP{u}}{\beta} \sum_{(v, u) \in \L} x_v^2w_{v, u}\\
             &\le& 4\sum_{u\in V_P}x_u^2\frac{\windegP{u}}{\beta} \sum_{(v, u) \in \L} x_v^2\cdot 2\gamma
             \quad \left( w_{v, u} \in [\gamma, 2\gamma)\right)\\\\
             &\leq& \frac{4}{\beta}\sum_{u\in V_P}x_u^2\windegP{u} \sum_{(v, u) \in \mathcal{L}} x_v^2\frac{2\woutdegP{v}}{2^{\kappa-1}\beta}
             \quad \left(\uwoutdegP{v} \geq 2^{\kappa - 1}\beta ~\text{and}~  \woutdegP{v}\geq \gamma \cdot \uwoutdegP{v} \right )\\
             &\leq& \frac{16}{2^{\kappa}\beta^2} \sum_{u\in V_P} \delta_u(P) x_u^2\sum_{v\in V_P} \delta_v(P) x_v^2
                \quad \left(\windegP{u} \leq \delta_u(P) ~\text{and}~ \woutdegP{v} \leq \delta_v(P)\right) \\
             &=& \frac{16}{2^{\kappa}\beta^2} \|\hat x\|_2^4,
\end{eqnarray*}
where $\|\hat x\|_2^4 = \|D^{1/2} x\|_2^4 = \sum_{u\in V_P} \delta_u(P) x_u^2\sum_{v\in V_P} \delta_v(P) x_v^2$.

Similar to before we have $$
  \|\hat{x}\|_2^2 = \hat{x}^T \hat{x} \leq \frac{1}{\lambda_1(\tilde L)}\hat{x}^T \tilde L \hat{x} 
                  \overset{by~(\ref{eq:cheeger})}{\leq} \frac{2}{h_G^2}(x^TLx) \overset{h_G > 2^{-\kappa} \text{ by Algorithm~\ref{alg:preprocessing}}}{<} {2} \cdot {2^{2\kappa}} \cdot (x^TLx),
$$
Recall that in an S2-graph, $2^{\kappa} \leq  1/(\beta \eps^2)$, hence 
$$\frac{\var{I_P}}{\left(x^T L(P) x\right)^2} = O\left(\frac{2^{3\kappa}}{\beta^2}\right) = O\left(1/(\beta^{5}\eps^{6})\right) =  O(\eps^2).$$ Setting constant $c_\beta$ large enough in $\beta = c_\beta \eps^2$, by a Chebyshev's inequality, $I_P$ is a $(1+\eps, 0.01)$-approximation to $x^T L(P) x$.  As before, the standard median trick can boost this probability to $1 - \delta$.

\begin{theorem}
\label{thm:improved-simple}
There is a sketching algorithm which given an S2-graph, outputs a $(1+\eps, \delta)$-spectral-sketch of size $\tilde{O}(\abs{V}/\eps^{\frac{8}{5}})$.
\end{theorem}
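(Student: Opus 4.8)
The plan is to run Algorithm~\ref{alg:improved-simple} (\textbf{Spectral-S2}) on the input S2-graph $G$ and then check two things: that the produced $\sk{G}$ has the claimed size, and that it supports a $(1+\eps,\delta)$-approximation to $x^T L(G) x$ for any fixed query $x$. Recall that the first step applies \textbf{Preprocessing} (Algorithm~\ref{alg:preprocessing}) so as to split $G$ into a family $\mathcal{P}$ of edge-disjoint components, each with Cheeger constant larger than $2^{-\kappa}$, plus a residual set $Q$ of cut edges that is stored verbatim. For each $P\in\mathcal{P}$ with buddy $\vec{P}$ we store all in-degrees $\windegP{u}$ and weighted degrees $\delta_u(P)$, the ``small'' edge set $\S(\vec P)=\{(u,v)\in\vec E_P : \uwoutdegP{u} < 2^{\kappa-1}\beta\}$ in full, and for every vertex $u$ a multiset of $\beta$ in-edges sampled from $\L$ with probabilities proportional to their weights.

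I would bound the size of $\sk{G}$ term by term. By the first part of Lemma~\ref{alg:improve-preprocessing}, $Q$ has $\tilde{O}(\beta|V|)$ edges. The stored degrees take $\tilde{O}(|V|)$ bits and the sampled edges $\beta\cdot|V| = \tilde{O}(\beta|V|)$ bits. Finally, the tails of the edges in $\S(\vec P)$ are exactly the vertices whose out-degree dropped below half its original value during preprocessing; by the second part of Lemma~\ref{alg:improve-preprocessing} there are only $\tilde{O}(2^{1-\kappa}|V|)$ such vertices, and each contributes fewer than $2^{\kappa-1}\beta$ edges to $\S(\vec P)$, for a total of $\tilde{O}(\beta|V|)$. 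Hence $\sk{G}$ uses $\tilde{O}(\beta|V|) = \tilde{O}(|V|/\eps^{8/5})$ bits.

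For correctness I would mimic Lemma~\ref{lem:basic-general}: define the global estimator $I_G = \sum_{P\in\mathcal{P}} I_P + x^T L(Q) x$ with $I_P$ as in Equation~(\ref{eq:estimator-P}). By the linearity of the Laplacian (Observation~\ref{ob:Laplacian}) and the unbiasedness of each $I_P$ shown above, $I_G$ is unbiased for $x^T L(G) x$. Since the $I_P$ use independent samples in vertex-disjoint components and $x^T L(Q) x$ is deterministic, $\var{I_G} = \sum_{P}\var{I_P}$; substituting the per-component bound $\var{I_P} = O(\eps^2)(x^T L(P) x)^2$ and using positive semidefiniteness of the matrices $L(P)$ and $L(Q)$ (so that $\sum_P (x^T L(P)x)^2 \le (\sum_P x^T L(P)x + x^T L(Q) x)^2 = (x^T L(G)x)^2$), we obtain $\var{I_G} = O(\eps^2)(x^T L(G)x)^2$. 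Chebyshev's inequality then yields a $(1+\eps,0.01)$-approximation, and running $O(\log(1/\delta))$ independent copies and taking the median gives a $(1+\eps,\delta)$-spectral-sketch, still of size $\tilde{O}(|V|/\eps^{8/5})$.

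Thus the theorem is largely an assembly of facts already established; the genuine difficulty, already surmounted in the text preceding the statement, is the per-component variance bound. That bound rests on three ingredients I would cite rather than redo: the buddy orientation $\vec{P}$, which lets us charge each edge $\{u,v\}$ to a single endpoint (its head) and so avoid the variance inflation of counting it twice; Lemma~\ref{alg:improve-preprocessing}, which ensures that outside the cheap-to-store set $\S(\vec P)$ every vertex still has out-degree $\Omega(2^\kappa\beta)$ --- precisely the inequality $\uwoutdegP{v}\ge 2^{\kappa-1}\beta$ invoked in the variance computation --- so that $\S(\vec P)$ stays within budget; and Cheeger's inequality (Lemma~\ref{lem:cheeger}) applied with $h_P > 2^{-\kappa}$ to bound $\|D^{1/2}x\|_2^2$ by $O(2^{2\kappa})\cdot x^T L(P)x$. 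The one point to keep careful track of in the write-up is aligning the parameter handed to \textbf{Preprocessing} with the value $2^{-\kappa}$ used both in Lemma~\ref{alg:improve-preprocessing} and in the Cheeger step; with that alignment the chain $\var{I_P}/(x^T L(P)x)^2 = O(2^{3\kappa}/\beta^2) = O(1/(\beta^5\eps^6)) = O(\eps^2)$ closes once $c_\beta$ is a large enough constant, using $2^\kappa\beta \le 1/\eps^2$ in an S2-graph.
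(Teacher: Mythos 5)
Your proposal is correct and follows essentially the same route as the paper: run \textbf{Preprocessing} then \textbf{Spectral-S2}, bound the sketch size via the two parts of Lemma~\ref{alg:improve-preprocessing} (for $Q$ and for $\S(\vec P)$) plus the $\beta|V|$ samples, cite the per-component variance bound $\var{I_P}=O(\eps^2)(x^TL(P)x)^2$ derived via Cheeger and the S2 degree bounds, and finish with Chebyshev and the median trick. The only (welcome) extra you supply is the explicit assembly of the global estimator $I_G=\sum_P I_P + x^TL(Q)x$ with the PSD inequality $\sum_P (x^TL(P)x)^2\le (x^TL(G)x)^2$, a step the paper leaves implicit here but carries out in Lemma~\ref{lem:basic-general} and Theorem~\ref{thm:improved-general}.
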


\subsection{General Graphs}
To deal with general graphs $G = (V, E, w)$ for which the only requirement is $w_{\max}/w_{\min} \le \text{poly}(|V|)$, we try to ``partition" it to poly$\log{\abs{V}}$ subgraphs, each of which is an S2-graph. We note that the partition we used here is {\em not} a simple vertex-partition or edge-partition, as will be evident shortly.

Our first step is to assign each edge in $E$ a direction so that in the later partition step we can partition $G$ to S2-graphs and simultaneously get their buddy directed graphs. The algorithm is described in Algorithm~\ref{alg:assign-direction}.

\begin{algorithm}[t]
\KwIn{A graph $G = (V, E)$ and a parameter $t$}
\KwOut{$\vec{G}=(V, \vec{E})$, a directed graph by assigning each edge in $G$ a direction}
Arbitrarily assign a direction to each edge in $E$, getting $\vec{E}$\;
\While{$\exists (u, v) \in \vec{E}$ s.t. $\uwoutdeg{u} \geq t$ and $\uwoutdeg{v} < t-1$}{
  Change the direction of $(u, v)$\;
}
\Return{$\vec{G}= (V, \vec{E})$}\;
\caption{{\bf Assign-Direction}($G, t$)}
\label{alg:assign-direction}
\end{algorithm}

\begin{lemma}
  \label{lem:assign-direction}
  Given $G = (V, E)$ and $s>1$ as input,  \textbf{Assign-Direction}$(G, s)$ (Algorithm~\ref{alg:assign-direction}) will finally stop and return $\vec{G} = (V, \vec{E})$ with the property that
  for each $(u, v) \in \vec{E}$, $\uwoutdeg{u} < s$ or $\uwoutdeg{v} \geq s - 1$.
\end{lemma}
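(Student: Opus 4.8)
The plan is to separate the statement into its two halves and dispatch them independently. The property claimed at termination is immediate from the loop guard: the \textbf{while} loop of Algorithm~\ref{alg:assign-direction} exits precisely when there is no edge $(u,v)\in\vec{E}$ with $\uwoutdeg{u}\ge s$ and $\uwoutdeg{v}<s-1$ holding simultaneously, and negating this says exactly that every edge $(u,v)\in\vec{E}$ satisfies $\uwoutdeg{u}<s$ or $\uwoutdeg{v}\ge s-1$. So the real content of the lemma is that the loop terminates after finitely many iterations.

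For termination I would run a potential-function argument. Define $\Phi(\vec{E})=\sum_{w\in V}\bigl(\uwoutdeg{w}\bigr)^2$, which is a non-negative integer bounded above by $\bigl(\sum_{w\in V}\uwoutdeg{w}\bigr)^2=|E|^2$ (each edge contributes $1$ to exactly one out-degree). I claim each execution of the loop body strictly decreases $\Phi$. When the guard selects an edge $(u,v)$ and we reverse it, $\uwoutdeg{u}$ drops by $1$, $\uwoutdeg{v}$ rises by $1$, and no other out-degree changes; writing $a=\uwoutdeg{u}$ and $b=\uwoutdeg{v}$ for the values \emph{before} the flip, the change in $\Phi$ equals $(a-1)^2-a^2+(b+1)^2-b^2 = 2(b-a)+2$. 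The guard guarantees $a\ge s$ and $b<s-1$, so $a-b>1$, and since out-degrees are integers this forces $a-b\ge 2$, i.e.\ $b-a\le -2$; hence $\Phi$ decreases by at least $2$ per iteration.

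Combining the two observations finishes the proof: $\Phi$ is a non-negative integer that strictly decreases at every iteration, so the loop executes at most $|E|^2/2$ times and then halts, and at that point the loop guard having failed yields exactly the asserted property. I do not expect any genuine obstacle here; the only point requiring care is the choice of potential. A linear quantity such as $\sum_w\uwoutdeg{w}=|E|$ is flip-invariant and useless, whereas the quadratic potential works because a flip always transfers an out-edge from a ``high'' tail (out-degree $\ge s$) to a strictly ``lower'' one (out-degree $<s-1$), and this gap of at least $2$ is precisely what makes $\sum_w(\uwoutdeg{w})^2$ drop.
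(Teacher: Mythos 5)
Your proposal is correct, and your termination argument is genuinely different from (and arguably cleaner than) the paper's. The paper tracks a potential defined only over the set $S$ of currently ``bad'' edges, namely $\Delta(S)=\sum_{(u,v)\in S}\bigl(d_u^{\tt{out}}(\vec{G})-d_v^{\tt{out}}(\vec{G})\bigr)$, and then must separately argue that a flip never introduces a new edge into $S$ -- a case analysis over which incident edges could newly satisfy the guard. Your global quadratic potential $\Phi=\sum_{w}\bigl(d_w^{\tt{out}}(\vec{G})\bigr)^2$ sidesteps all of that: since a flip only moves one unit of out-degree from a vertex with out-degree $a\ge s$ to one with out-degree $b<s-1$, integrality gives $a-b\ge 2$ and hence $\Delta\Phi=2(b-a)+2\le -2$, with no need to reason about $S$ changing. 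Both proofs yield a polynomial iteration bound (the paper's $\Delta(S)$ is also at most $O(|E|\cdot|V|)$, yours gives $|E|^2/2$); the explicit bound does not matter for the lemma as stated. One minor point worth making explicit, which you do implicitly: the deduction $a-b\ge 2$ uses that out-degrees are integers, so $a>b+1$ (a strict real inequality) upgrades to $a\ge b+2$ -- this is the same integrality step the paper relies on when it asserts the potential drops ``by at least $2$.''
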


\begin{proof}
  The second part is trivial according to Algorithm~\ref{alg:assign-direction}. Now we show that {\bf Assign-Direction}$(G, s)$ will finally stop. Let $S = \{(u, v) \in \vec{E}\ |\ \uwoutdeg{u} \geq s ~\text{and}~ \uwoutdeg{v} < s-1 \}$, and $\Delta(S) = \sum_{(u, v)\in S} (\uwoutdeg{u} - \uwoutdeg{v})$. The algorithm stops if and only if $\Delta(S) = 0$. It is easy to see that $\Delta(S)$ is finite for arbitrary $\vec{G}$, thus the algorithm will stop if we can show that $\Delta(S)$ will decrease by \emph{at least} $2$ each time we execute Line $3$. To this end, we only need to show that each execution of Line $3$ will not add any new edge to $S$.

Consider executing Line $3$ on edge $(u, v)$. For $(u, w) \not \in S$, since $(u, v) \in S$, we have $\uwoutdeg{w} \geq s - 1$. Clearly, after executing Line $3$, $(u, w)$ will not be added to $S$ because $\uwoutdeg{w} \geq s - 1$ still holds.
For $(w, v) \not \in S$, since $(u, v) \in S$, we have $\woutdeg{w} < s$. After executing Line $3$, $\uwoutdeg{w} < s$ still holds, hence $(w, v)$ will not be added into $S$.
\end{proof}

For a set of directed edges $\vec{E}$, let $d_u^{\tt{out}}(\vec{E}) \gets |\{v\ |\ (u, v) \in \vec{E}\}|$.  Our partition step is described in Algorithm~\ref{alg:improved-partition}. We first run the spectral sparsification algorithm ~\cite{BSS14}, and then assign directions to each edge. Next, we partition the edges based on their weights, and then partition the directed graph based on the unweighted out-degree of each vertex. Finally, we recursively perform all the above steps on a subgraph induced by a set of edges which have large weights. Notice that the purpose of introducing directions on edges is to assist the edge partitions.

\begin{algorithm}[t]
\KwIn{A graph $G = (V, E, w)$ and a parameter $\eps$}
\KwOut{A set of graph components $\mathcal{P}$}
\If {$|V|< 3$}{
\Return $\{ G \}$\;
}
Run \cite{BSS14} on $G$ with parameter $\eps$, get a spectral sparsifier $G' = (V, E', w')$ with $|E'| =  \eta\cdot \frac{|V|}{\eps^2}$ where $\eta = \tilde{O}(1)$\;
$\tilde{\eps} = \eps/\sqrt{\eta}$, $s \gets  1/{\tilde{\eps}}^2$\;
$\vec{G}' = (V, \vec{E}', w') \gets$ {\bf Assign-Direction}$(G', 2s)$ \;
Partition $\vec{E}'$ into $\vec{E}_j'$'s such that for each $\vec{E}_j'$, all $e \in \vec{E}_j'$ have $w_e \in [2^j, 2^{j+1})$\; 
$\mathcal{P} \gets \emptyset$\;
\ForEach{$\vec{E}_j'$} {
\tcc{Recall $\beta = c_\beta \eps^{-\frac{8}{5}}$ for a large enough constant $c_\beta$}
  Let $\vec{E}_{-\infty} \gets \left\{(u, v) \in \vec{E}_j'\ |\ d_u^{\tt{out}}(\vec{E}_j') < \beta \right\}$\; 
  Let $\vec{E}_i \gets \left\{(u, v) \in \vec{E}_j'\ |\ d_u^{\tt{out}}(\vec{E}_j') \in [2^i \beta, 2^{i+1}\beta) \right\}$ for all $i \geq 0$ ~s.t.~ $2^i\beta \leq s = 1/{\tilde{\eps}}^2$\;
  Add $G(\vec{E}_{-\infty})$ and $G(\vec{E}_i)$ for all $0 \le i \le \log(1/(\tilde{\eps}^2 \beta))$ into $\mathcal{P}$\;
  Remove $\vec{E}_{-\infty}, \vec{E}_i$ from $\vec{E}'$\;
}
\tcc{Recursively apply on the remaining edges $E'$ (remove  directions on edges)}
\Return{$\mathcal{P} \cup \text{\bf Partition } (G(E'))$}\;
\caption{{\bf Partition}($G$)}
\label{alg:improved-partition}
\end{algorithm}

For the analysis, we first show that after each recursion in Algorithm~\ref{alg:improved-partition}, the number of vertices of the graph induced by the remaining edges will decrease by at least a constant fraction. In this way we can bound the number of recursion steps by $O(\log\abs{V})$.

\begin{lemma}
  \label{lem:recursion}
Given a graph $G = (V, E)$ with $|E| \leq s|V|\ (s > 1)$, let $\vec{G} \gets$ \textbf{Assign-Direction}$(G, 2s)$. If we remove all $(u, v)$ with $d_u^{\tt{out}}(\vec{E}) < 2s$ from $\vec{E}$ and get a subset $\vec{E}_r\subset \vec{E}$, then we have $|V_r(\vec{E}_r)| \leq |V| / (2 - 1/s)$ 
\end{lemma}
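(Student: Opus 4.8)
The plan is to exhibit a single ``high out-degree'' vertex set that contains all of $V_r(\vec{E}_r)$, and then bound the size of that set by an elementary edge-counting argument.

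First I would invoke the invariant guaranteed by Lemma~\ref{lem:assign-direction}: after \textbf{Assign-Direction}$(G, 2s)$, every edge $(u,v) \in \vec{E}$ satisfies $d_u^{\tt{out}}(\vec{E}) < 2s$ or $d_v^{\tt{out}}(\vec{E}) \ge 2s-1$. Set $B = \{ w \in V \mid d_w^{\tt{out}}(\vec{E}) \ge 2s-1 \}$. The key claim is $V_r(\vec{E}_r) \subseteq B$. To see this, take any edge $(u,v) \in \vec{E}_r$; by definition $\vec{E}_r$ retains exactly the edges whose tail has $d_u^{\tt{out}}(\vec{E}) \ge 2s > 2s-1$, so $u \in B$. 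Moreover, since $d_u^{\tt{out}}(\vec{E}) \ge 2s$ it is \emph{not} the case that $d_u^{\tt{out}}(\vec{E}) < 2s$, so the invariant forces $d_v^{\tt{out}}(\vec{E}) \ge 2s-1$, i.e.\ $v \in B$ as well. Hence every vertex incident to an edge of $\vec{E}_r$ lies in $B$.

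Next I would bound $|B|$. Since $\sum_{w \in V} d_w^{\tt{out}}(\vec{E}) = |\vec{E}| = |E| \le s|V|$ and each $w \in B$ contributes at least $2s-1$ to this sum, we get $(2s-1)\,|B| \le s|V|$, i.e.\ $|B| \le \frac{s|V|}{2s-1} = \frac{|V|}{2 - 1/s}$. Combining with $V_r(\vec{E}_r) \subseteq B$ gives $|V_r(\vec{E}_r)| \le |V|/(2-1/s)$, which is exactly the claim.

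There is no real obstacle here; the only point worth flagging is that one should \emph{not} attempt to control the tails and the heads of the retained edges separately. The content of Lemma~\ref{lem:assign-direction} is precisely that \emph{both} endpoints of any retained edge have out-degree $\ge 2s-1$, so folding all of $V_r(\vec{E}_r)$ into the single set $B$ is what makes one counting inequality suffice; the algebraic identity $\frac{s}{2s-1} = \frac{1}{2-1/s}$ is then immediate.
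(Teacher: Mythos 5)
Your proof is correct and follows essentially the same approach as the paper: invoke the invariant from Lemma~\ref{lem:assign-direction} to show every endpoint of a retained edge has out-degree at least $2s-1$, then bound the number of such vertices by counting edges via $\sum_w d_w^{\tt{out}}(\vec{E}) = |E| \le s|V|$. Your version is in fact slightly more careful than the paper's, which asserts $d_v^{\tt{out}}(\vec{E}_r) \ge 2s-1$ (out-degree in the \emph{reduced} edge set, which need not hold if $v$'s own out-edges were all deleted) where what is actually used is $d_v^{\tt{out}}(\vec{E}) \ge 2s-1$, exactly as you wrote.
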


Before proving this lemma, note that in Algorithm~\ref{alg:improved-partition}, $|E| \leq s|V|$ is guaranteed by Line $3$, and ``remove all $(u, v)$ with $d_u^{\tt{out}}(\vec{E}) < 2s$" is done by Line $12$.

\begin{proof}
By Lemma \ref{lem:assign-direction}, 
  for each $(u, v) \in \vec{E}$, we have $d_u^{\tt{out}}(\vec{E}) < 2s$ or $d_v^{\tt{out}}(\vec{E}) \geq 2s - 1$.  If we remove all $(u, v)$ with $d_u^{\tt{out}}(\vec{E}) < 2s$, then for each $(u, v) \in \vec{E}_r$, we have $d_u^{\tt{out}}(\vec{E}_r) \geq 2s$ and $d_v^{\tt{out}}(\vec{E}_r) \ge 2s-1$. Consequently, $|V_r(\vec{E}_r)| (2s - 1) \leq |E| \leq s|V|$. Therefore we have
  $|V_r(\vec{E}_r)| \leq |V| / (2 - 1/s)$.
\end{proof}

The following lemma summarizes the properties of $\mathcal{P}$ returned by Algorithm~\ref{alg:improved-partition}.

\begin{lemma}
  \label{lem:improved-partition}
  Given $G = (V, E, w)$ with $w_{\max}/w_{\min} = \text{poly}(|V|)$, let $\mathcal{P} \gets$~\textbf{Partition}$(G)$ be a set of graphs after the partition, then (1) $|\mathcal{P}| = \text{poly}(\log |V|)$; and (2) for each $\vec{P} = (V_P, \vec{E}_P, w_P) \in \mathcal{P}$, if $|V_P| > 2$ then for any $e \in \vec{E}_P$,  $w_e \in [\gamma, 2\gamma)$ for some $\gamma > 0$, and one of the following properties holds:
  \begin{enumerate}
  \item[] {\em Property $1$:} For each $u \in V_P$, $\uwoutdegP{u} < \beta$.
  \item[] {\em Property $2$:} There exists $i\ (0 \le i \le \log(\eta/(\beta \eps^2)))$,
    for each $u \in V_P$, $\uwoutdegP{u} \in [2^i\beta, 2^{i+1}\beta)$.
  \end{enumerate}
\end{lemma}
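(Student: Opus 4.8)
The plan is to trace through Algorithm~\ref{alg:improved-partition} and verify the two claims separately. For claim (1), I would bound the recursion depth using Lemma~\ref{lem:recursion}: each call to \textbf{Partition} first sparsifies via \cite{BSS14} so that the working graph $G'$ has $|E'| \le s|V|$ with $s = 1/\tilde\eps^2$, and then the only edges passed to the recursive call are those not removed in the $\vec E_i$ and $\vec E_{-\infty}$ buckets — precisely the edges $(u,v)$ with $d_u^{\tt{out}}(\vec E') \ge s$ (more carefully, $\ge 2s$ after the \textbf{Assign-Direction}$(G', 2s)$ call, matching the hypothesis of Lemma~\ref{lem:recursion}). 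Hence Lemma~\ref{lem:recursion} applies and the vertex count of the induced subgraph drops by a factor of $(2 - 1/s) \ge 3/2$ at each level, so after $O(\log |V|)$ levels we hit the base case $|V| < 3$. Within each level, the number of components added to $\mathcal{P}$ is at most the number of weight classes $\vec E_j'$ (which is $O(\log |V|)$ since $w_{\max}/w_{\min} = \text{poly}(|V|)$) times the number of degree buckets $i$ with $2^i \beta \le s$ (which is $O(\log(1/(\tilde\eps^2 \beta))) = O(\log(1/\eps))$), plus one $\vec E_{-\infty}$ bucket per weight class. Multiplying the per-level count by the $O(\log|V|)$ depth gives $|\mathcal{P}| = \text{poly}(\log|V|)$.

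For claim (2), I would argue that each component $P = G(\vec E_i)$ or $G(\vec E_{-\infty})$ placed into $\mathcal{P}$ inherits the required structure directly from its construction. The weight-within-a-factor-of-$2$ property holds because edges are first partitioned into classes $\vec E_j'$ with $w_e \in [2^j, 2^{j+1})$, so set $\gamma = 2^j$. For the degree property: edges in $\vec E_{-\infty}$ are exactly those with $d_u^{\tt{out}}(\vec E_j') < \beta$, which gives Property~1 — but here one must be slightly careful, since $\uwoutdegP{u}$ is the out-degree \emph{within the induced component} $G(\vec E_{-\infty})$, which is at most $d_u^{\tt{out}}(\vec E_j') < \beta$, so Property~1 still holds (it is an upper bound, and removing edges only decreases it). Similarly, edges in $\vec E_i$ are those with $d_u^{\tt{out}}(\vec E_j') \in [2^i\beta, 2^{i+1}\beta)$; the subtle point is the \emph{lower} bound on $\uwoutdegP{u}$ once we restrict to the induced subgraph. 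This is where I expect to need the key observation: $\vec E_i$ is defined as \emph{all} edges out of vertices $u$ with $d_u^{\tt{out}}(\vec E_j') \in [2^i\beta, 2^{i+1}\beta)$, so \emph{every} out-edge of such a $u$ in $\vec E_j'$ is included in $\vec E_i$ — hence $\uwoutdegP{u} = d_u^{\tt{out}}(\vec E_j') \in [2^i\beta, 2^{i+1}\beta)$ exactly (not just an upper bound), giving Property~2 with the stated range of $i$ (using $s = 1/\tilde\eps^2 = \eta/\eps^2$, so $2^i\beta \le s$ translates to $i \le \log(\eta/(\beta\eps^2))$).

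The main obstacle — and the step I would be most careful about — is precisely this interaction between the out-degree partitioning and the passage to induced subgraphs: one needs that the edge sets $\vec E_i$ are "downward closed from the tail vertex" (all out-edges of a bucketed vertex go into the same bucket), which is exactly what Lines~10–11 of Algorithm~\ref{alg:improved-partition} ensure, and this is what makes $\uwoutdegP{u}$ in the component equal to $d_u^{\tt{out}}(\vec E_j')$ rather than merely bounded by it. A secondary point to handle cleanly is that the recursion in Lemma~\ref{lem:recursion} requires the threshold $2s$ to match the \textbf{Assign-Direction} parameter and the removal threshold; I would note that Line~12 removes $\vec E_{-\infty}$ and all $\vec E_i$, and since the remaining edges all have $d_u^{\tt{out}} \ge s$ in each weight class — and one can take the union over weight classes to get $d_u^{\tt{out}}(\vec E') \ge s$, or more directly invoke that the buckets exhaust all edges with bounded tail degree — the hypothesis of Lemma~\ref{lem:recursion} is met, closing the depth bound. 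Everything else (the weight classing, counting buckets) is routine.
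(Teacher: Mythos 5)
Your proposal is correct and follows essentially the same route as the paper: bound the recursion depth via Lemma~\ref{lem:recursion} (reconciling the algorithm's per-weight-class removal criterion with the lemma's hypothesis through $d_u^{\tt{out}}(\vec E_j') \le d_u^{\tt{out}}(\vec E')$) and count the $O(\log|V|) \cdot O(\log(1/\eps))$ buckets per level, with part (2) read off from the algorithm. Your elaboration of part (2) --- in particular the observation that each $\vec E_i$ is ``downward closed from the tail,'' so $\uwoutdegP{u}$ equals $d_u^{\tt{out}}(\vec E_j')$ exactly for every bucketed tail --- is actually more explicit than the paper, which dismisses part (2) with ``the rest directly follows from the algorithm.''
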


\begin{proof}
We only need to bound the size of $\mathcal{P}$. The rest directly follows from the algorithm.

First, Line $6$ and Line $10$ will partition $\vec{E}'$ to $O(\log^2\abs{V})$ (assuming $\abs{V} > 1/\eps$) sets. 
Second, we bound the number of recursion steps. Note that if we directly remove $E_s = \{(u, v) \in \vec{E}'\ |\ d_u^{\tt{out}}(\vec{E}') < 2s\}$ from $\vec{E}'$, then by Lemma~\ref{lem:recursion} we know that there are at most $O(\log\abs{V})$ recursion steps. The subtlety is that we first partition $\vec{E}'$ into $\vec{E}'_j$'s and then remove all $(u, v) \in \vec{E}'_j$ with $d_u^{\tt{out}}(\vec{E}'_j) < 2s$. However, since $d_u^{\tt{out}}(\vec{E}'_j) \le d_u^{\tt{out}}(\vec{E}')$, every edge in $E_s$ will still be removed by at Line $12$. Therefore $\abs{\mathcal{P}} = O(\log^3\abs{V})$.
\end{proof}

We summarize our main result.

\begin{theorem}
  \label{thm:improved-general}
  Given an undirected positively weighted $G = (V, E, w)$ with $w_{\max}/w_{\min} \le \text{poly}(|V|)$, there is a sketching algorithm that outputs a $(1+\eps, 0.01)$-spectral-sketch of size $\tilde{O}(\abs{V}/\eps^{\frac{8}{5}})$.
\end{theorem}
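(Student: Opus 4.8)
The plan is to assemble Theorem~\ref{thm:improved-general} from the machinery already developed, in the same spirit as the proof of Theorem~\ref{thm:basic-general} but using the finer partition of Algorithm~\ref{alg:improved-partition} and the S2-graph sketch of Theorem~\ref{thm:improved-simple}. First I would preprocess $G$ by running the spectral sparsifier of \cite{BSS14} (Lemma~\ref{lem:spielman}) so that we may assume $|E| = \tilde O(|V|/\eps^2)$ and $w_{\max}/w_{\min} = \text{poly}(|V|)$; any multiplicative $(1+\eps)$ error incurred here composes with the later error by rescaling $\eps$ by a constant. Then I would invoke \textbf{Partition}$(G)$ to obtain $\mathcal{P}$, and by Lemma~\ref{lem:improved-partition} we have $|\mathcal{P}| = \text{poly}(\log|V|)$ and each $P \in \mathcal{P}$ (with $|V_P|>2$) has all edge weights within a factor of $2$ and satisfies either Property~1 or Property~2 on out-degrees of its buddy orientation $\vec{P}$.

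The key point is that each such $P$, together with its buddy $\vec{P}$, is an S2-graph (Definition~\ref{def:S2-graph}): Property~2 is exactly condition~2 of an S2-graph with $2^\kappa\beta \in [2^i\beta, 2^{i+1}\beta)$ and the bound $2^\kappa\beta \le 1/\eps^2$ guaranteed by the range $0 \le i \le \log(\eta/(\beta\eps^2))$ (after absorbing $\eta = \tilde O(1)$ into the polylog factors, or equivalently working with $\tilde\eps = \eps/\sqrt{\eta}$ as in the algorithm); Property~1 is the boundary case $\kappa$ small, which is even easier since every vertex is ``light'' and we can just store all its out-edges — the total is $\tilde O(\beta|V_P|)$ edges, within budget. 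For each S2-graph component I would apply \textbf{Spectral-S2} (Algorithm~\ref{alg:improved-simple}) to get a $(1+\eps,0.01)$-spectral-sketch of size $\tilde O(|V_P|/\eps^{8/5})$ by Theorem~\ref{thm:improved-simple}. At query time, the overall estimator is $I_G = \sum_{P \in \mathcal{P}} I_P + \sum_{\text{recursion}} (\text{base-case Laplacians})$, which is unbiased by linearity of the Laplacian (Observation~\ref{ob:Laplacian}), since the partition — although not a plain vertex- or edge-partition — does produce an edge-disjoint decomposition of $E$ across all levels of recursion.

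For the variance, I would use the same PSD-monotonicity trick as in Lemma~\ref{lem:basic-general}: since all $L(P)$ are PSD and their quadratic forms sum to $x^T L(G) x$, we get $\var{I_G} \le O(\eps^2)\sum_P (x^T L(P) x)^2 \le O(\eps^2)(\sum_P x^T L(P) x)^2 = O(\eps^2)(x^T L(G) x)^2$, provided the $I_P$'s are computed with independent randomness; Chebyshev then gives the $(1+\eps,0.01)$ guarantee, and the median-of-$\log(1/\delta)$ trick boosts this to $1-\delta$ (we take $\delta = 1/n$). For the space bound, summing $\tilde O(|V_P|/\eps^{8/5})$ over the vertex-disjoint components within one weight class gives $\tilde O(|V|/\eps^{8/5})$, and there are only $\text{poly}(\log|V|)$ classes and recursion levels, so the total is $\tilde O(|V|/\eps^{8/5}) = \tilde O(|V|/\eps^{1.6})$ bits.

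The main obstacle is bookkeeping the degree-degradation issue across the recursion: when \textbf{Preprocessing} inside \textbf{Spectral-S2} removes the cut set $Q$ to enforce the Cheeger bound, it can shrink the out-degrees of some vertices of $\vec{P}$ out of the S2 window. Lemma~\ref{alg:improve-preprocessing} controls this — only $\tilde O(2^{1-\kappa}|V_P|)$ vertices drop by more than half, and storing all their out-edges costs only $\tilde O(2^{1-\kappa}|V_P|) \cdot 2^{\kappa-1}\beta = \tilde O(\beta|V_P|)$ — but one has to check that these ``patched'' vertices genuinely contribute zero variance (their edges are stored exactly) and that the remaining vertices still have out-degree within a factor of $4$, so that the variance computation in Theorem~\ref{thm:improved-simple} (which only used $\uwoutdegP{v} \ge 2^{\kappa-1}\beta$) goes through. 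The other delicate point is verifying that the interaction of the weight-class partition with the out-degree partition in Algorithm~\ref{alg:improved-partition} still removes every edge destined for recursion (handled by $d_u^{\tt{out}}(\vec{E}'_j) \le d_u^{\tt{out}}(\vec{E}')$ in Lemma~\ref{lem:improved-partition}) so that Lemma~\ref{lem:recursion} still caps the recursion depth at $O(\log|V|)$.
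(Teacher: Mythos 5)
Your proposal follows essentially the same route as the paper's proof: spectrally sparsify, run \textbf{Partition} to produce polylogarithmically many components (Property-1 components stored verbatim, Property-2 components fed to \textbf{Spectral-S2} via Theorem~\ref{thm:improved-simple}), sum the per-component estimators, and invoke the PSD-monotonicity argument from Lemma~\ref{lem:basic-general} to bound the variance by $O(\eps^2)(x^T L(G)x)^2$. One claim worth tightening: you assert the recursion yields an edge-disjoint decomposition of $E$ making $I_G$ exactly unbiased for $x^T L(G)x$, but Algorithm~\ref{alg:improved-partition} re-runs BSS at every recursion level, so the components partition a nested sequence of sparsifiers rather than $E$ itself, giving $\E[I_G] = (1\pm O(\eps))^{O(\log|V|)}\, x^T L(G)x$; this is harmless once $\eps$ is rescaled by the $O(\log|V|)$ recursion depth (absorbed into $\tilde O(\cdot)$), and the paper's terse proof glosses over the same point.
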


\begin{proof}
Our final algorithm is described in Algorithm~\ref{alg:sketch-improved-general}.
For each component $H \in \mathcal{P}$, we store the whole $H$ if Property $1$ in Lemma~\ref{lem:improved-partition} holds (let $\mathcal{P}_1$ be this set), and we apply Theorem~\ref{thm:improved-simple} on $H$ (set $\delta = 1/\text{poly}\log \abs{V}$) if Property $2$  in Lemma~\ref{lem:improved-partition} holds (let $\mathcal{P}_2$ be this set), thus the space usage is bounded by $\tilde{O}(\abs{V}/\tilde{\eps}^{\frac{8}{5}}) = \tilde{O}(\abs{V}/\eps^{\frac{8}{5}})$.  Since by $\abs{\mathcal{P}_1 \cup \mathcal{P}_2} = \abs{\mathcal{P}} = O(\log^3\abs{V})$ by Lemma~\ref{lem:improved-partition}, we can bound the total space by $\tilde{O}(\abs{V}/\eps^{\frac{8}{5}})$.

Our final estimator, given $x \in \mathbb{R}^{\abs{V}}$, is 
$$I_G = \sum_{H \in \mathcal{P}_2} I_H + \sum_{H \in \mathcal{P}_1} x^T L(H) x,$$ where $I_H$ is defined in Equation~(\ref{eq:estimator-P}). Similar to the proof of Lemma~\ref{lem:basic-general}, we can bound $\var{I_G}$ by $O(\eps^2 \cdot (x^T L(G) x)^2)$. By a Chebyshev's inequality, $I_G$ is a $(1+\eps, 0.01)$-approximation of $x^T L(G) x$.
\end{proof}

\begin{algorithm}[t]
\KwIn{$G = (V, E, w)$; a quality control parameter $\eps$}
\KwOut{A $(1+\eps, 0.01)$-spectral-sketch $\sk{G}$ of $G$}
Let $\mathcal{H} \gets$ {\bf Partition}(G)\;
$\sk{G} \gets \emptyset$\;
\ForEach{$H \in \mathcal{H}$} {
\If{$H$ satisfies Property $1$ in Lemma~\ref{lem:improved-partition}}{
     Add the whole $H$ to $\sk{G}$\;
}


\ElseIf{$H$ satisfies Property $2$ in Lemma~\ref{lem:improved-partition}}{Add \textbf{Spectral-S2}($H, \eps$) into $\sk{G}$ \;}
}

\Return{$\sk{G}$}\;
\caption{{\bf Spectral-Improved}($G, \eps$)}
\label{alg:sketch-improved-general}
\end{algorithm}

{\bf Acknowledgments:} David Woodruff would like to thank Alex Andoni, Jon Kelner, and Robi Krauthgamer for helpful discussions, as well as acknowledge support from the XDATA program of the Defense Advanced Research Project Agency (DARPA), administered through Air Force Research Laboratory contract FA8750-12-C-0323. Qin Zhang would like to thank the organizers of Bertinoro Workshop on Sublinear Algorithms 2014 where he heard this problem.
\bibliographystyle{abbrv}
\bibliography{paper}

\newpage

\appendix
\section{Appendix}
\subsection{Lower Bound for General Matrices with the For All Guarantee}
\begin{theorem}
\label{thm:all-exact}
Any sketch $s(A)$ which satisfies the for all guarantee, even when all of the entries are promised to be in the range $\{-1, -1+1/n^C, -1+2/n^C, \ldots, 1-1/n^C, 1\}$ for a sufficiently large constant $C > 0$, 
must use $\Omega(n^2)$ bits of space.
\end{theorem}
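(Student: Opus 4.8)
The plan is to use a standard incompressibility / encoding argument: exhibit a large family $\mathcal{F}$ of matrices such that a sketch with the for-all guarantee distinguishes any two members of $\mathcal{F}$, so the sketch must have at least $\log_2 |\mathcal{F}| = \Omega(n^2)$ bits. Concretely, first I would recall the packing fact quoted in the introduction: there is a family of $2^{\Omega(n^2)}$ orthogonal projectors $P_1,\dots,P_r$ onto $n/2$-dimensional subspaces of $\mathbb{R}^n$ with $\|P_i - P_j\|_2 \ge 1/2$ for all $i \ne j$ (this follows from a volume/net argument on the Grassmannian). Since we need bounded-precision entries, the second step is to round each $P_i$ entrywise to the grid $\{-1,-1+n^{-C},\dots,1\}$, obtaining matrices $\tilde{P}_i$; because rounding perturbs each entry by at most $n^{-C}$, we have $\|\tilde{P}_i - P_i\|_2 \le \|\tilde{P}_i - P_i\|_F \le n \cdot n^{-C} = n^{1-C}$, which is $o(1)$ for $C > 1$. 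Hence by the triangle inequality the rounded matrices still satisfy $\|\tilde{P}_i - \tilde{P}_j\|_2 \ge 1/2 - 2n^{1-C} \ge 1/4$ for all $i\ne j$, and there are still $2^{\Omega(n^2)}$ of them.

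The heart of the argument is the third step: show that from $s(\tilde{P}_i)$ (with the for-all guarantee for, say, $\eps = 1/10$) one can recover the index $i$. Fix any two distinct indices $i \ne j$. Since $\|\tilde{P}_i - \tilde{P}_j\|_2 \ge 1/4$, there is a unit vector $x$ with $\|(\tilde{P}_i - \tilde{P}_j)x\|_2 \ge 1/4$, and since $\tilde P_i - \tilde P_j$ is symmetric we may take $x$ to be an eigenvector, so actually $\|\tilde P_i x\|_2$ and $\|\tilde P_j x\|_2$ differ substantially in a way that one can test. More carefully, the for-all guarantee lets us evaluate, for every $x$, a $(1\pm\eps)$-approximation to $\|\tilde P_i x\|_2^2 = x^T \tilde P_i^T \tilde P_i x$ from $s(\tilde P_i)$ alone. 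The decoding procedure is: given a sketch $\sigma$, for each candidate index $k$ compute (using $\tilde P_k$, which the decoder knows) whether $\sigma$'s reported values are consistent with $\|\tilde P_k x\|_2^2$ up to the approximation factor, over a suitable finite test set of $x$'s; output the unique consistent $k$. The claim is that no $j \ne i$ is consistent with $s(\tilde P_i)$, because at the witnessing vector $x$ the true values $\|\tilde P_i x\|_2^2$ and $\|\tilde P_j x\|_2^2$ are separated by more than a $(1+\eps)^2$ factor once $\eps$ is a small enough constant — here one uses that $\tilde P_i \approx P_i$ is close to idempotent, so $\|\tilde P_i x\|_2^2$ is close to $x^T P_i x \in [0,1]$ for unit $x$, and the spectral gap $\ge 1/4$ forces a genuine multiplicative gap on the eigenvector. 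Since the decoder recovers $i$ from $s(\tilde P_i)$, the map $i \mapsto s(\tilde P_i)$ is injective, giving $|s(\cdot)| \ge \log_2 |\mathcal{F}| = \Omega(n^2)$ bits.

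The step I expect to be the main obstacle is converting the additive spectral separation $\|\tilde P_i - \tilde P_j\|_2 \ge 1/4$ into a clean \emph{multiplicative} separation of the quadratic forms $x^T \tilde P_i^T \tilde P_i x$ versus $x^T \tilde P_j^T \tilde P_j x$ at a single query vector, in a way robust to the $\eps$-approximation and to the rounding error. The subtlety is that a large $\|\cdot\|_2$ difference a priori only says the two quadratic forms differ at \emph{some} $x$ by a large \emph{ratio of differences}, not that their values have a large ratio; one must argue that because each $P_i$ is a projector (so its nonzero and zero eigenspaces are well-separated, eigenvalues exactly $0$ or $1$), the witnessing eigenvector $x$ of $\tilde P_i - \tilde P_j$ can be chosen so that one of $x^T P_i x, x^T P_j x$ is close to $1$ and the other bounded away from $1$ — e.g. by taking $x$ in the top eigenspace of $P_i$ but analyzing its Rayleigh quotient against $P_j$, or by a direct argument that $\|P_i x\|^2 - \|P_j x\|^2$ is large in absolute value for a well-chosen unit $x$ and that $\|P_j x\|^2$ is then bounded below $1$. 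Handling the possibility that both are small (so their \emph{ratio} could still be large even with a small absolute gap) actually works in our favor, and the genuinely delicate case — both close to $1$ — is excluded precisely by the $\ge 1/4$ spectral gap. Once this separation lemma is in hand with explicit constants, choosing $\eps$ (and $C$) small enough to make the approximation windows disjoint is routine, and the counting bound $\Omega(n^2)$ follows immediately.
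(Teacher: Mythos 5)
Your proposal follows the paper's overall structure --- the same packing family of $2^{\Omega(n^2)}$ orthogonal projectors with pairwise operator-norm separation $\ge 1/2$, the same rounding to the bounded-precision grid, and the same encoding/injectivity argument to conclude $\Omega(n^2)$ bits --- but you take a genuinely different route through the key step of producing a test vector that distinguishes $s(Q_i)$ from $s(Q_j)$. The paper's proof decomposes $\mathbb{R}^n$ using $J = U_i \cap U_j$, $K_i = U_i \cap J^\perp$, $K_j = U_j \cap J^\perp$ and exhibits a vector $z$ (a component of the witnessing $x^*$) for which $\|Q_j z\|_2^2 = 0$ while $\|Q_i z\|_2^2 \ge 1/32$, a zero-versus-nonzero dichotomy that is automatically robust to any $(1\pm\eps)$ approximation. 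You instead take the top eigenvector $x$ of $Q_i - Q_j$, observe $|x^T Q_i x - x^T Q_j x| \ge 1/4$, and use that for (near-)projectors both quadratic forms lie in $[0,1]$ (up to the $O(n^{1-C})$ rounding error, which also controls $\|Q_i^T Q_i - Q_i\|$) to force the ratio of the larger to the smaller to be at least about $4/3$; this is enough to separate the approximation windows for a sufficiently small constant $\eps$. Your route is more elementary and avoids the subspace-intersection geometry altogether. The one thing you leave dangling --- you phrase the multiplicative-gap claim as a ``separation lemma'' still to be supplied --- is really just the two-line calculation that if $a, b \in [0,1]$ and $a - b \ge 1/4$ then $a/b \ge 4/3$; you already describe both cases correctly (both-small is favorable, both-near-one is excluded by the additive gap), so this is a loose end to write out rather than a genuine missing idea.
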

\begin{proof}
We follow the outline given in the introduction. 

Consider a net of $n \times n$ projector matrices
$P$ onto $n/2$-dimensional subspaces $U$ of $\mathbb{R}^n$. It is known 
(see Corollary 5.1 of \cite{kt13}, which uses a result of \cite{aek04}) that there
exists a family $\mathcal{F}$ of $r = 2^{\Omega(n^2)}$ distinct matrices 
$P_{1}, \ldots, P_{r}$ so that for all $i \neq j$, 
$\|P_i - P_j\|_2 \geq \frac{1}{2}$. By rounding each
of the entries of each matrix $P_i$ to the nearest additive 
multiple of $1/n^C$,
obtaining a symmetric matrix $Q_i$ with entries 
in $\{-1, -1+1/n^C, \ldots, 1\}$, 
we obtain a family $\mathcal{F}'$ of $r=2^{\Omega(n^2)}$ distinct matrices
$Q_i$ such that $\|Q_i - Q_j\|_2 \geq \frac{1}{4}$ for all $i \neq j$. This
implies there is a unit vector $x^*$ for which
$\|Q_ix^*-Q_jx^*\|_2 \geq \frac{1}{4}$, or equivalently, 
\begin{eqnarray}\label{eqn:optimize}
\|Q_ix^*\|_2^2 + \|Q_jx^*\|_2^2 - 2 \langle (x^*)^TQ_i^T, Q_j x^* \rangle \geq \frac{1}{16}.
\end{eqnarray}
Let $J$ be the subspace of $\mathbb{R}^n$ which is the intersection of the spaces
spanned by $Q_i$ and $Q_j$, let $K_i$ be the subspace of $Q_i$ orthogonal to $J$,
and let $K_j$ be the subspace of $Q_j$ orthogonal to $J$. We identify $J$, $K_i$, and $K_j$,
with their corresponding projection matrices. To maximize the left hand side of 
(\ref{eqn:optimize}), we can assume the unit vector $x^*$ is in the span of the union of $J, K_i,
$ and $K_j$. We can therefore write $x^* = Jx^* + K_ix^* + K_j x^*$, and note that the three summand
vectors are orthogonal to each other. Expanding (\ref{eqn:optimize}),
the left hand side is equal to 
$$2\|Jx^*\|_2^2 + \|K_ix^*\|_2^2 + \|K_j x^*\|_2^2
- 2 \|Jx^*\|_2^2 =  \|K_ix^*\|_2^2 + \|K_j x^*\|_2^2.$$
Hence, by (\ref{eqn:optimize}), it must be that either $\|K_ix^*\|_2^2 \geq \frac{1}{32}$ or
$\|K_j x^*\|_2^2 \geq \frac{1}{32}$. This implies the vector $z = K_ix^*$ satisfies
$\|Q_iz\|_2^2 \geq \frac{1}{32}$, but $\|Q_jz\|_2^2 = 0$. 

Therefore,
if there were a sketch
$s(A)$ which had the ``for all'' guarantee for any matrix $A \in \mathcal{F}$, 
one could query $s(A)$ on the vector $z$ given above for each pair $Q_i, Q_j \in \mathcal{F}$, 
thereby recovering the matrix $A \in \mathcal{F}$. Hence, $s(A)$ is an encoding of an arbitrary
element $A \in \mathcal{F}$ which implies that the size of $s(A)$ is 
$\Omega(\log |\mathcal{F}|) = \Omega(n^2)$ bits, completing the proof. 
\end{proof}

\subsection{Lower Bound for General Matrices with the For Each Guarantee}
\begin{theorem}
\label{thm:all-approx}
Any sketch $s(A)$ which satisfies the for each guarantee with $(1+\eps)$-approximation with constant probability, 
even when all of the entries of $A$ are promised to be in the set $\{0, 1\}$, 
must use $\Omega(n/\eps^2)$ bits of space. 
\end{theorem}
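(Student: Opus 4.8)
The plan is to prove an $\Omega(n/\eps^2)$ lower bound via a reduction from a two-party one-way communication problem, in the spirit of the known $\Omega(1/\eps^2)$ lower bound for the single-coordinate version of this problem (approximating $\|Ax\|_2^2$ for a fixed $x$, equivalently approximating a single inner product / count up to $1+\eps$ with constant probability). The natural candidate is the \IND{} problem augmented with a ``Gap-Hamming''-like or ``counting''-like twist: Alice holds a matrix $A \in \{0,1\}^{n \times n}$ (or a suitable structured $0/1$ matrix), Bob holds an index and a query vector $x$, and I want that a $(1+\eps)$-approximation to $x^T A x$ (or $\|Ax\|_2^2$; note $\|Ax\|_2^2 = x^T A^T A x$ with $A^T A$ PSD, matching the PSD framing used elsewhere in the paper) reveals a bit that Alice's input encodes, so that $n/\eps^2$ such bits can be packed in.

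First I would recall the single-instance primitive: there is a distribution over $(S, j)$ where $S \subseteq [m]$ with $m = \Theta(1/\eps^2)$ and $j \in [m]$, such that from a $(1+\eps)$-approximation to a quantity of the form $|S| + \mathbf{1}[j \in S]$ — or more robustly, from a Gap-Hamming-style estimate — one recovers $\mathbf{1}[j \in S]$ with probability, say, $2/3$, and any one-way protocol solving this needs $\Omega(1/\eps^2)$ bits (this is the standard $\Omega(\eps^{-2})$ bound for estimating set size / frequency moments / inner products). Second, I would take $n/2$ independent copies of this primitive and place them in disjoint $2 \times 2$ (or slightly larger) diagonal blocks of a $0/1$ matrix $A$: block $k$ encodes set $S_k$, and Bob's query vector $x$ will be supported on the coordinates of the single block $k^*$ he is interested in, with entries chosen so that $x^T A x$ (restricted, by the block-diagonal structure, to block $k^*$) equals the target quantity $|S_{k^*}| + \mathbf{1}[j_{k^*} \in S_{k^*}]$ up to scaling. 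The block-diagonal structure is the crucial point: because $x$ is zero outside block $k^*$, the quadratic form only ``sees'' that block, so a $(1+\eps)$-approximation to $x^T A x$ is a $(1+\eps)$-approximation to the $k^*$-th instance.

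Third, I would run the direct-sum / information-complexity argument: by a standard averaging (or by the information-cost additivity for product distributions over the blocks), any sketch $s(A)$ that answers a fixed query with constant probability must, in particular, let Bob solve a uniformly random one of the $n/2$ embedded instances with constant probability conditioned on the sketch, so $s(A)$ must carry $\Omega(n/2)$ times the single-instance cost $\Omega(1/\eps^2)$, i.e. $\Omega(n/\eps^2)$ bits. I would need to be slightly careful that ``succeeds for a fixed $x$ with constant probability'' is enough: the reduction only queries $s(A)$ at one vector $x$ (determined by $k^*$ and the instance), so the for-each guarantee suffices, and I do not need a union bound over blocks. The main obstacle I anticipate is the usual one in these hardness reductions: making the single-coordinate $\Omega(\eps^{-2})$ primitive robust enough that (a) its inputs fit into a $0/1$ matrix and a real query vector, and (b) a multiplicative $(1+\eps)$-approximation — rather than an additive one — genuinely pins down the hidden bit; this typically forces a careful choice of the ``baseline'' count ($\Theta(\eps^{-2})$ planted ones per block) so that a $\pm \eps$ multiplicative error is exactly an additive $\pm 1$ error, together with an anti-concentration argument showing the hidden instance sits near a gap with constant probability. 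Once that primitive and its $\Omega(\eps^{-2})$ lower bound are in hand, the block-diagonal packing and the direct-sum step are routine.
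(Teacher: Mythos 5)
There is a genuine gap in the counting, centered on the block-diagonal packing step. You propose to embed $n/2$ independent copies of a single-instance primitive in disjoint $2\times 2$ (``or slightly larger'') diagonal blocks, where each block encodes a set $S_k$ of size $\Theta(1/\eps^2)$. But a $t\times t$ block of a $\{0,1\}$ matrix carries at most $t^2$ bits, so encoding a single primitive instance of entropy $\Theta(1/\eps^2)$ already forces $t = \Omega(1/\eps)$; a constant-size block cannot hold the planted $\Theta(\eps^{-2})$ ones you yourself say are needed. Once $t = \Omega(1/\eps)$, you can fit at most $m = n/t = O(n\eps)$ disjoint blocks along the diagonal, and the direct sum over $m$ independent instances each costing $\Omega(1/\eps^2)$ bits yields only $\Omega(n\eps \cdot \eps^{-2}) = \Omega(n/\eps)$, one factor of $1/\eps$ short of the target. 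This is not a technicality you can tune away: any true block-diagonal layout in which each block must carry its own self-contained $\Omega(1/\eps^2)$-entropy instance tops out at $\Omega(n/\eps)$ because the number of blocks times the side length is at most $n$.

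The paper avoids this by \emph{not} using a block-diagonal decomposition, and this is exactly where the extra $1/\eps$ comes from. It builds $A$ so that only the top $\gamma = 1/\eps^2$ rows are nonzero and all $n$ columns share those rows. Columns $1,\ldots,\gamma$ hold $\gamma$ fixed ``reference'' bitstrings (the strings $b$ of Lemma~\ref{lem:gap-ham}, one per index $i$), while each of the remaining $n-\gamma$ columns holds a ``data'' string $a$ derived from one of the $n-\gamma$ length-$\gamma$ blocks of Alice's Indexing input $z$. Bob's query is always $x = e_i + e_j$ for one reference column and one data column; from a $(1+\eps)$-approximation of $\|A^i + A^j\|_2^2$ together with the cheaply-communicated $\|A^i\|_2^2$, $\|A^j\|_2^2$, $nnz(A^i)$, $nnz(A^j)$ he recovers an additive approximation to $\Delta(A^i,A^j)$ accurate to $O(1/\eps) = O(\sqrt{\Delta})$, which by Lemma~\ref{lem:gap-ham} pins down the bit $z_\ell$. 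Because the $\gamma$ reference columns are reused across \emph{all} $n-\gamma$ data columns, each data column contributes $\gamma = 1/\eps^2$ fresh bits, giving an Indexing instance of length $(n-\gamma)\gamma = \Omega(n/\eps^2)$ and hence the claimed bound via Lemma~\ref{lem:index}, with no direct-sum theorem needed. If you want to salvage the block-diagonal intuition, the fix is essentially to drop the block-diagonality: let the reference columns be shared globally and let each ``block'' be a single data column rather than a closed $t\times t$ submatrix; at that point you have reconstructed the paper's argument.
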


The proof is very similar (and inspired from) a result in \cite{GWWZ14} for approximating the number of non-zero entries of $Ax$. We include it here for completeness. Before giving the proof, we need some tools. Let $\Delta(a,b)$ be the Hamming distance between two bitstrings $a$ and $b$.

\begin{lemma}[modified from \cite{JKS08}]
\label{lem:gap-ham}
Let $x$ be a random bitstring of length $\gamma = 1/\eps^2$, and let $i$ be a random index in $[\gamma]$. Choose $\gamma$ public random bitstrings $r^1, \ldots, r^{\gamma}$, each of length $\gamma$. Create $\gamma$-length bitstrings $a$, $b$ as follows:
\begin{itemize}
\item
For each $j \in [\gamma]$, $a_j = \text{majority}\{r^j_k\ |\ \text{indices } k \text{ for which } x_k = 1\}$.
\item
For each $j \in [\gamma]$, $b_j = r^j_i$.
\end{itemize}
There is a procedure which, with probability $1/2+\delta$ for a constant $\delta > 0$, can determine the value of $x_i$ from any $c \sqrt{\Delta(a,b)}$-additive approximation to $\Delta(a,b)$, provided $c > 0$ is a sufficiently small constant.
\end{lemma}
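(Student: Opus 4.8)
The plan is to reduce from the two-party communication complexity of \textsc{Indexing}, whose one-way randomized complexity is $\Omega(\gamma)$ bits for inputs of length $\gamma$, via an intermediate \textsc{Gap-Hamming}-type step that is essentially the construction of \cite{JKS08}. The key point I would establish is that the pair $(a,b)$ constructed from the public randomness $r^1,\dots,r^\gamma$ has the property that $\Delta(a,b)$ is noticeably smaller when $x_i=1$ than when $x_i=0$, with the gap on the order of $\sqrt{\gamma}=1/\eps$, so that a $c\sqrt{\Delta(a,b)}$-additive approximation suffices to distinguish the two cases with constant advantage over $1/2$.

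First I would fix a coordinate $j\in[\gamma]$ and analyze $\Pr[a_j\neq b_j]$ over the public string $r^j$. Recall $b_j=r^j_i$ and $a_j$ is the majority of $\{r^j_k : x_k=1\}$; let $w$ be the number of ones in $x$, which concentrates around $\gamma/2$. Conditioned on $x$ and $i$, if $x_i=1$ then $b_j$ is one of the bits feeding into the majority vote defining $a_j$, so $b_j$ agrees with $a_j$ with probability noticeably above $1/2$ — roughly $1/2 + \Theta(1/\sqrt{w}) = 1/2 + \Theta(\eps)$ — by a standard anti-concentration (Littlewood–Offord / local central limit) estimate for the majority of $w$ unbiased bits conditioned on one of them. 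If $x_i=0$, then $b_j=r^j_i$ is independent of all the bits defining $a_j$, so $\Pr[a_j=b_j]=1/2$ exactly. Hence $\E[\Delta(a,b)] = \gamma/2$ when $x_i=0$ and $\E[\Delta(a,b)] = \gamma/2 - \Theta(\eps\gamma) = \gamma/2 - \Theta(1/\eps)$ when $x_i=1$; since $\sqrt{\Delta(a,b)} = \Theta(\sqrt{\gamma}) = \Theta(1/\eps)$, this separation is a constant multiple of $\sqrt{\Delta(a,b)}$.

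Next I would argue concentration: across the $\gamma$ independent public strings $r^1,\dots,r^\gamma$, the events $\{a_j\neq b_j\}$ are mutually independent conditioned on $(x,i)$, so $\Delta(a,b)$ is a sum of independent indicators and by a Chernoff bound deviates from its mean by $O(\sqrt{\gamma\log\gamma})$ only with small probability; choosing constants so that the true gap $\Theta(1/\eps)$ dominates $O(\sqrt{\gamma})$ (which holds since $\gamma = 1/\eps^2$, so $\sqrt\gamma = 1/\eps$ — here one must be careful and take the advantage $\delta$ to absorb the constant in front) lets the decoder compare its estimate of $\Delta(a,b)$ against the threshold $\gamma/2 - \Theta(1/\eps)$ and output its guess for $x_i$. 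A $c\sqrt{\Delta(a,b)}$-additive approximation with $c$ a sufficiently small constant keeps the decoder on the correct side of the threshold with probability $1/2 + \delta$ for some absolute constant $\delta>0$, averaging over $x$ and $i$ as well as the internal randomness.

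The main obstacle I anticipate is the anti-concentration estimate for the conditional majority: one needs that the majority of $w$ uniform random bits, conditioned on the value of a single designated bit, is biased toward that bit by $\Theta(1/\sqrt w)$, and that this bias is robust to $w$ fluctuating around $\gamma/2$ (it is, since $1/\sqrt{w} = \Theta(1/\sqrt\gamma)$ throughout the typical range of $w$). This is exactly the technical core of \cite{JKS08}, so I would cite their estimate and only adapt the bookkeeping — in particular re-deriving the relation between the additive approximation quality $c\sqrt{\Delta(a,b)}$ and the gap, since here the target functional is $\sqrt{\Delta(a,b)}$ rather than $\Delta(a,b)$ itself. Everything else (the Chernoff concentration of $\Delta(a,b)$, the $w \approx \gamma/2$ concentration, and fixing the public coins) is routine.
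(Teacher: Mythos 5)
The paper does not actually prove Lemma~\ref{lem:gap-ham}; it imports it (with cosmetic modifications) from \cite{JKS08} and uses it as a black box, so there is no in-paper argument to compare against. Your reconstruction follows the standard JKS08 route and the core observations are right: for each $j$, $\Pr[a_j=b_j]$ is exactly $1/2$ when $x_i=0$ and $1/2+\Theta(1/\sqrt w)$ when $x_i=1$ by a Littlewood--Offord/binomial-bias estimate, the events are independent across $j$, and the resulting gap in $\E[\Delta(a,b)]$ is $\Theta(\sqrt\gamma)$, which is the same order as $\sqrt{\Delta(a,b)}$.

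The one place the write-up goes wrong is the concentration step. You invoke Chernoff to say $\Delta(a,b)$ concentrates to within $O(\sqrt{\gamma\log\gamma})$ and then ask that the gap $\Theta(\sqrt\gamma)$ ``dominate'' $O(\sqrt\gamma)$. That is not a dominance argument: the gap in means and the standard deviation are both $\Theta(\sqrt\gamma)$, so there is no choice of constants under which Chernoff-type tail bounds separate the two cases with probability $1-o(1)$. The lemma's claim is only a constant advantage $1/2+\delta$, and the right tool is an anti-concentration or Berry--Esseen comparison (or at least two-sided Chebyshev): the conditional laws of $\Delta(a,b)$ under $x_i=0$ and $x_i=1$ are near-binomial with means shifted by a constant multiple of their common standard deviation, so thresholding at the midpoint gives advantage bounded away from $0$; the extra $c\sqrt{\Delta(a,b)}=O(c\sqrt\gamma)$ approximation error then eats into that constant but, for $c$ small enough, leaves a positive $\delta$. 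You do gesture at this with ``take the advantage $\delta$ to absorb the constant,'' and the remaining technicalities you list (tie-breaking, concentration of $w$ near $\gamma/2$) are the right ones, so the proposal is salvageable once the Chernoff framing is replaced by the anti-concentration one.
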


We introduce the \IND\ problem. In \IND, we have two randomized 
parties Alice and Bob. Alice has $x \in \{0, 1\}^n$, and Bob has an index $i \in [n]$. The communication is one-way from Alice to Bob, and the goal is for Bob to compute $x_i$.

\begin{lemma}[see, e.g., \cite{KN97}]
\label{lem:index}
To solve the \IND\ problem with success probability $1/2+\delta$ for any constant $\delta > 0$, Alice needs to send Bob $\Omega(n)$ bits  even with shared randomness.
\end{lemma}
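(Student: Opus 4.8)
The plan is to prove this classical one-way communication lower bound by a standard information-theoretic argument against the uniform distribution on Alice's input. Fix any protocol in which, using a shared public random string $R$, Alice sends a message $M$ of at most $c$ bits and Bob, on input $(M,R,i)$, outputs a bit equal to $x_i$ with probability at least $1/2+\delta$ over $R$. Since this must hold for every fixed $i$, averaging over a uniformly random input shows that, with $X=(X_1,\dots,X_n)$ uniform on $\{0,1\}^n$ and independent of $R$ and $M=M(X,R)$ Alice's message, Bob's output equals $X_i$ with probability at least $1/2+\delta$ over $(X,R)$, for each $i$. (This is the usual passage from worst-case to average-case correctness, which is immediate here because the guarantee is demanded for every index; keeping $R$ as a random variable throughout, rather than first fixing it, is what lets the argument absorb public randomness cleanly.)

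The first step is the entropy chain
\begin{equation*}
c \;\ge\; H(M) \;\ge\; H(M\mid R) \;\ge\; I(X;M\mid R) \;=\; H(X\mid R) - H(X\mid M,R) \;=\; n - H(X\mid M,R),
\end{equation*}
where the first inequality holds because $M$ takes at most $2^c$ values and the last equality uses that $X$ is uniform and independent of $R$. So it suffices to show $H(X\mid M,R)\le n(1-c'\delta^2)$ for an absolute constant $c'>0$. By the chain rule and the fact that conditioning does not increase entropy,
\begin{equation*}
H(X\mid M,R) \;=\; \sum_{i=1}^n H\!\left(X_i \mid M,R,X_1,\dots,X_{i-1}\right) \;\le\; \sum_{i=1}^n H(X_i\mid M,R).
\end{equation*}
For each fixed $i$, Bob's output is a function of $(M,R)$ predicting $X_i$ with error at most $1/2-\delta$, so the optimal (MAP) predictor of $X_i$ from $(M,R)$ has error at most $1/2-\delta$; Fano's inequality for a binary variable then gives $H(X_i\mid M,R)\le H_b(1/2-\delta)$, where $H_b(\cdot)$ is the binary entropy function. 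Since $H_b$ is increasing on $[0,1/2]$ and $H_b(1/2-\delta)\le 1-c'\delta^2$ for a constant $c'>0$ (binary entropy equals $1$ at $1/2$ with vanishing first derivative and strictly negative second derivative there), we obtain $H(X\mid M,R)\le n(1-c'\delta^2)$, and plugging back in yields $c\ge c'\delta^2 n=\Omega(n)$ since $\delta$ is constant.

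I do not expect any genuine obstacle: the statement is textbook, and the only points needing care are the three routine checks noted above — the worst-case-to-average-case step, the conditioning on $R$ for public coins, and the invocation of Fano's inequality via the optimal predictor together with monotonicity of $H_b$ on $[0,1/2]$. As a sanity check, the deterministic case ($c\ge n$) is immediate: if Bob is always correct then $x_i=\mathrm{Bob}(M(x),i)$ for all $i$, so $x\mapsto M(x)$ is injective and $2^c\ge 2^n$; the entropy argument above is simply the noise-robust version of this observation.
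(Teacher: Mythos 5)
Your proof is correct: the paper itself gives no argument for this lemma (it only cites \cite{KN97}), and your chain-rule-plus-Fano argument, with the shared randomness handled by conditioning on $R$ and Bob's predictor dominated by the MAP estimator, is exactly the standard information-theoretic proof that the citation points to. No gaps beyond the negligible $H(M)\le c+1$ slack for variable-length messages, which does not affect the $\Omega(n)$ bound.
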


\begin{proof} (for Theorem~\ref{thm:all-approx})
Let $\gamma = 1/\eps^2$. The proof is by a reduction from the \IND\ problem, where Alice has a random bitstring $z$ of length $(n - \gamma) \cdot \gamma$, and Bob has an index $\ell \in [(n - \gamma) \cdot \gamma]$. 

Partition $z$ into $n - \gamma$ contiguous substrings $z^1, z^2, \ldots, z^{n - \gamma}$. Alice constructs a matrix $A$ as follows: she uses shared randomness to sample $\gamma$ random bitstrings $r^1, \ldots, r^\gamma$, each of length $\gamma$. For the leftmost $\gamma \times \gamma$ submatrix of $A$, in the $i$-th column for each $i \in [\gamma]$, Alice uses $r^1, \ldots, r^\gamma$ and the value $i$ to create the $\gamma$-length bitstring $b$ according to Lemma~\ref{lem:gap-ham} and assigns it to this column. Next, in the remaining $n - \gamma$ columns of $A$, in the $j$-th column for each $j \in \{\gamma+1, \ldots, n\}$, Alice uses $z^{j - \gamma}$ and $r^1, \ldots, r^\gamma$ to create the  $\gamma$-length bitstring $a$ according to Lemma~\ref{lem:gap-ham} and assigns it to this column.

Alice then sends Bob the sketch $s(A)$, together with $\{\|A^i\|_2^2 \ (i \in [n])\}$ and $\{nnz(A^i) \ (i \in [n])\}$ where $nnz(x)$ is the number of non-zero coordinates of $x$. Note that both $\{\|A^i\|_2^2 \ (i \in [n])\}$ and $\{nnz(A^i) \ (i \in [n])\}$ can be conveyed using  $O(n \log (1/\eps)) = o(n/\eps^2)$ bits of communication, which is negligible.

Bob creates a vector $x$ by putting a $1$ in the $i$-th and $j$-th coordinates, where $i, j\ (i \in [\gamma], j \in \{\gamma + 1, \ldots,  n\})$ satisfies $\ell = i + (j - \gamma - 1) \cdot \gamma$. Then $Ax$ is simply the sum of the $i$-th and $j$-th columns of $A$, denoted by $A^i + A^j$. Note that $A^i, A^j$ correspond to a pair of $(a, b)$ created from $r^1, \ldots, r^\gamma$ and $z^{j - \gamma}$ (and $(z^{j - \gamma})_i = z_\ell$). Now, from a $(1 + c_\eps\eps)$-approximation to $\|A^i + A^j\|_2^2$ for a sufficiently small constant $c_\eps$ (which is an absolute constant, independent of $\eps$), and exact values of $\|A^i\|_2^2$ and  $\|A^j\|_2^2$, Bob can approximate $2 \langle A^i, A^j \rangle= \|A^i + A^j\|_2^2 - \|A^i\|_2^2 - \|A^j\|_2^2$ up to an additive error $c_\eps \eps \cdot \|A^i + A^j\|_2^2 \le c'_\eps /\eps$ for a sufficiently small constant $c'_\eps$. Then, using a $c'_\eps /\eps$-additive approximation of $2 \langle A^i, A^j \rangle$, and exact values of $nnz(A^i)$ and $nnz(A^j)$, Bob can approximate $\Delta(A^i, A^j) = nnz(A^i) + nnz(A^j) - 2 \langle A^i, A^j \rangle$ up to a $c'_\eps/\eps = c''_\eps \sqrt{\Delta(A^i, A^j)}$ additive approximation for a sufficiently small constant $c''_\eps$, and consequently compute $z_\ell$ correctly with probability $1/2+\delta$ for a constant $\delta > 0$ (by Lemma~\ref{lem:gap-ham}).  

Therefore, any algorithm that produces a $(1+c_\eps \eps)$-approximation of $\|A^i + A^j\|_2^2$ with probability $1 - \delta'$ for some sufficiently small constant $\delta' < \delta$ can be used to solve the Indexing problem of size $(n - \gamma) \gamma = \Omega(n/\eps^2)$ with probability $1 - \delta' - (1/2 - \delta) > 1/2 + \delta''$ for a constant $\delta'' > 0$. The theorem follows by the reduction and Lemma~\ref{lem:index}. 
\end{proof}

\end{document}